\def\dosth#1{\ifx###1##\else\dofirst#1\anytoken\fi}
\def\doagain#1\anytoken{\dosth{#1}}
\def\payoffpairs#1#2#3{\m=#1\multiply\m by 4 \advance\m by -1 \n=1
  \def\dofirst##1{\put(\n,-\m){\makebox(0,0){\strut##1}}\advance\n by 4 \doagain}%
  \dosth{#2\strut}%
  \m=#1\multiply\m by 4 \advance\m by -3 \n=3 \dosth{#3\strut}}
\def\singlepayoffs#1#2{\m=#1\multiply\m by 4 \advance\m by -2 \n=2
  \def\dofirst##1{\put(\n,-\m){\makebox(0,0){\strut##1}}\advance\n by 4 \doagain}%
  {\large\dosth{#2\strut}}}
\newcommand{\bimatrixgame}[8]{%
\setlength{\unitlength}{#1}%
\newcount\rows
\newcount\cols
\rows=#2
\cols=#3
\newcount\rowcoord
\newcount\colcoord
\rowcoord=\rows
\colcoord=\cols
\multiply\rowcoord by 4
\multiply\colcoord by 4
\newcount\m
\newcount\n
\m=\rowcoord
\n=\colcoord
\advance\m by 2 
\advance\n by 2 
\begin{picture}(\n,\m)(-2,-\rowcoord)
\m=\rows
\n=\cols
\advance\m by 1
\advance\n by 1 
\thinlines
\multiput(0,0)(0,-4){\m}{\line(1,0){\colcoord}}
\multiput(0,0)(4,0){\n}{\line(0,-1){\rowcoord}}
\put(0,0){\line(-1,1){2}}
\put(-1.5,0.5){\makebox(0,0)[r]{#4}}  
\put(-.7,1.7){\makebox(0,0)[l]{#5}}   
\n=2
\def\dofirst##1{\put(-0.8,-\n){\makebox(0,0)[r]{\strut##1}}\advance\n by 4
   \doagain}
\dosth{#6\strut} 
\n=2
\def\dofirst##1{\put(\n,1.0){\makebox(0,0){\strut##1}}\advance\n by 4
   \doagain}
\dosth{#7\strut}#8%
\end{picture}}
\newcommand{\eps}{\ensuremath{\epsilon}\xspace}
\newcommand{\supp}{\mathrm{supp}}
\newcommand{\profx}{\ensuremath{\mathbf{x}}\xspace}
\newcommand{\xs}{\ensuremath{\mathbf{x}^*}\xspace}
\newcommand{\ys}{\ensuremath{\mathbf{y}^*}\xspace}
\newcommand{\xu}{\ensuremath{\mathbf{x_b}}\xspace}
\newcommand{\xl}{\ensuremath{\mathbf{x_s}}\xspace}
\newcommand{\xmp}{\ensuremath{\mathbf{x_{mp}}}\xspace}
\newcommand{\ymp}{\ensuremath{\mathbf{y_{mp}}}\xspace}
\newcommand{\profy}{\ensuremath{\mathbf{y}}\xspace}
\newcommand{\js}{\ensuremath{\mathfrak{j}^*}\xspace}
\newcommand{\jp}{\ensuremath{\mathfrak{j}'}\xspace}
\newcommand{\pr}{\ensuremath{\mathrm{Pr}}\xspace}
\newcommand{\rr}{\ensuremath{\mathfrak{r}}\xspace}
\newcommand{\cc}{\ensuremath{\mathfrak{c}}\xspace}
\newcommand{\rh}{\ensuremath{\hat{\mathfrak{r}}}\xspace}
\newcommand{\regret}{\ensuremath{\mathcal{R}}\xspace}
\newcommand{\sr}{\ensuremath{S}\xspace}
\newcommand{\brs}{\ensuremath{B}\xspace}
\newcounter{alg}
\newcommand{\alg}[1]{\refstepcounter{alg}\label{#1}}
\begin{document}

\title{Distributed Methods for \\ Computing Approximate Equilibria}
\author{
Artur Czumaj\inst{1}
\and
Argyrios Deligkas\inst{2}
\and
Michail Fasoulakis\inst{1}
\and
John Fearnley\inst{2}
\and
Marcin Jurdzi\'nski\inst{1}
\and
Rahul Savani\inst{2}
}
\institute{
Department of Computer Science and DIMAP, University of Warwick, UK \and
Department of Computer Science, University of Liverpool, UK
}

\authorrunning{
Czumaj,
Deligkas,
Fasoulakis,
Fearnley,
Jurdzi\'nski, and
Savani.
}

\maketitle

\begin{abstract}
We present a new, distributed method to compute approximate Nash equilibria in bimatrix games. In contrast to previous approaches that analyze the two payoff matrices at the same time (for example, by solving a single LP that combines the two players payoffs), our algorithm first solves two independent LPs, each of which is derived from one of the two payoff matrices, and then compute approximate Nash equilibria using only limited communication between the players.

Our method has several applications for improved bounds for efficient computations of approximate Nash equilibria in bimatrix games.
First, it yields a best polynomial-time algorithm for computing \emph{approximate well-supported Nash equilibria (WSNE)}, which guarantees to find a 0.6528-WSNE in polynomial time.
Furthermore, since our algorithm solves the two LPs separately, it can be used to improve upon the best known algorithms in the limited communication setting: the algorithm can be implemented to obtain a randomized expected-polynomial-time algorithm that uses poly-logarithmic communication and finds a 0.6528-WSNE.
%
The algorithm can also be carried out to beat the best known bound in the query complexity setting, requiring $O(n \log n)$ payoff queries to compute a 0.6528-WSNE.
Finally, our approach can also be adapted to provide the best known communication efficient algorithm for computing \emph{approximate Nash equilibria}: it uses poly-logarithmic communication to find a 
0.382-approximate Nash equilibrium.
%
\end{abstract}

\section{Introduction}

The problem of finding equilibria in non-cooperative games is a central problem
in modern game theory. Nash's seminal theorem proved that every finite
normal-form game has at least one \emph{Nash equilibrium}~\cite{N51}, and this
raises the natural question of whether we can find one efficiently. After
several years of extensive research, this study has culminated in a proof that
finding a Nash equilibrium is PPAD-complete~\cite{DGP09} even for two-player
\emph{bimatrix games}~\cite{CDT09}, which is considered to be strong evidence
that there is no polynomial-time algorithm for this problem.

\paragraph{\bf Approximate equilibria.}
The fact that computing an exact Nash equilibrium of a bimatrix game is unlikely
to be tractable, has led to the study of \emph{approximate} Nash equilibria.
There are in fact two notions of approximate equilibrium, both of which will be
studied in this paper. An \emph{$\epsilon$-approximate Nash equilibrium}
($\epsilon$-NE) is a pair of strategies in which neither player can increase
their expected payoff by more than $\epsilon$ by deviating from their assigned
strategy. An \emph{$\epsilon$-well-supported Nash equilibrium} ($\epsilon$-WSNE)
is a pair of strategies in which both players only place probability on
strategies whose payoff is within $\epsilon$ of the best response payoff.
Every $\epsilon$-WSNE is an $\epsilon$-NE, but the converse does not hold, so a
WSNE is a more restrictive notion.

Approximate Nash equilibria are the more well studied of two concepts. A line of
work has studied the best guarantee that can be achieved in polynomial
time~\cite{DMP07,DMP09,BBM10}, and the best algorithm known so far is the
gradient descent method of Tsaknakis and Spirakis~\cite{TS08} that finds a $0.3393$-NE in
polynomial time. On the other hand, progress on computing
approximate-well-supported Nash equilibria has been less forthcoming. The first
correct algorithm was provided by Kontogiannis and Spirakis~\cite{KS10} (which shall
henceforth be referred to as the KS algorithm), who gave a polynomial time
algorithm for finding a $\frac{2}{3}$-WSNE. This was later slightly
improved by Fearnley, Goldberg, Savani, and S\o rensen~\cite{FGSS12} (whose
algorithm we shall refer to as the FGSS-algorithm), who showed
that the WSNEs provided by the KS algorithm could be improved, and this yields a
polynomial time algorithm for finding a $0.6608$-WSNE; this is the best
approximation guarantee for WSNEs that is currently known.

\paragraph{\bf Communication complexity.}

Approximate Nash equilibria can also be studied from the \emph{communication
complexity} point of view, which captures the amount of communication the
players need to find a good approximate Nash equilibrium. It models a natural
scenario where the two players each know their own payoff matrix, but do not
know their opponents payoff matrix. The players must then follow a communication
protocol that eventually produces strategies for both players. The goal is to
design a protocol that produces a sufficiently good $\epsilon$-NE or
$\epsilon$-WSNE while also minimizing the amount of communication between the
two players.

Communication complexity of equilibria in games has been studied in previous
works~\cite{CS04,HM10}. The recent paper of Goldberg and Pastink~\cite{GP14}
initiated the study of communication complexity in the bimatrix game setting.
There they showed $\Theta(n^2)$ communication is required to find an exact Nash
equilibrium of an $n \times n$ bimatrix game. Since these games have $\Theta(n^2)$
payoffs in total, this implies that there is no communication efficient protocol
for finding exact Nash equilibria in bimatrix games. For approximate equilibria,
they showed that one can find a $\frac{3}{4}$-Nash equilibrium \emph{without any
communication}, and that in the no-communication setting, finding an
$\frac{1}{2}$-Nash equilibrium is impossible. Motivated by these
positive and negative results, they focused on the most interesting setting,
which allows only a polylogarithmic (in $n$) amount of communication (number of bits) between the
players. They demonstrated that one can compute $0.438$-Nash equilibria and
$0.732$-well-supported Nash equilibria in this setting.

\paragraph{\bf Query complexity.}

The payoff query model is motivated by practical applications of game theory. It
is often the case that we know that there is a game to be solved, but we do not
know what the payoffs are, and in order to discover the payoffs, we would have
to play the game. This may be quite costly, so it is natural to ask whether we
can find an equilibrium of a game while minimising the number of experiments
that we must perform.

\emph{Payoff queries} model this situation. In the payoff query model we are
told the structure of the game, ie.\ the strategy space, but we are not told the
payoffs. We can then make payoff queries, where we propose a pure strategy
profile, and we are told the payoff to each player under that strategy profile.
Our task is to compute an equilibrium of the game while minimising the number of
payoff queries that we make.

The study of query complexity in bimatrix games was initiated by Fearnley et al.\ \cite{FGGS13}, who gave a deterministic algorithm
for finding a $\frac{1}{2}$-NE using $2n - 1$ payoff queries. A subsequent paper
of Fearnley and Savani~\cite{FS14} showed a number of further results. Firstly,
they showed a $\Omega(n^2)$ lower bound on the query complexity of finding an
$\epsilon$-NE with $\epsilon < \frac{1}{2}$, which combined with the result
above, gives a complete view of the deterministic query complexity of
approximate Nash equilibria in bimatrix games. They then give a randomized
algorithm that finds a $(\frac{3 - \sqrt{5}}{2} + \epsilon)$-NE using $O(\frac{n
\cdot \log n}{\epsilon^2})$ queries, and a randomized algorithm that finds a
$(\frac{2}{3} + \epsilon)$-WSNE using $O(\frac{n \cdot \log n}{\epsilon^4})$
queries.

\paragraph{\bf Our contribution.}

In this paper we introduce a \emph{distributed} technique that allows us to efficiently compute approximate Nash equilibria and approximate well-supported Nash equilibria using limited communication between the players.

Traditional methods for computing WSNEs have used an LP based approach that, when used on a bimatrix game $(R, C)$, solve the zero-sum game $(R - C, C - R)$. The KS algorithm \cite{KS10} showed that if there is no pure $\frac23$-WSNE, then the solution to the zero-sum game is a $\frac23$-WSNE. The slight improvement of the FGSS-algorithm \cite{FGSS12} to 0.6608 was obtained by adding two further methods to the KS algorithm: if the KS algorithm does not produce a 0.6608-WSNE, then either there is a $2 \times 2$ \emph{matching pennies} sub-game that is 0.6608-WSNE or the strategies from the zero-sum game can be improved by shifting the probabilities of both players within their supports in order to produce a 0.6608-WSNE.

In this paper, we take a different approach. We first show that the bound of $\frac23$ can be matched using a pair of \emph{distributed} LPs. Given a bimatrix game $(R, C)$, we solve the two zero-sum games $(R, -R)$, and $(-C, C)$, and we give a straightforward procedure that we call the \emph{base algorithm}, which uses the solutions to these games to produce a $\frac23$-WSNE of $(R, C)$. Goldberg and Pastink \cite{GP14} also considered this pair of LPs, but their algorithm only produces a 0.732-WSNE. We then show that the base algorithm can be improved by applying the probability-shifting and matching-pennies ideas from the FGSS-algorithm. That is, if the base algorithm fails to find a 0.6528-WSNE, then a 0.6528-WSNE can be obtained either by shifting the probabilities of one of the two players, or by identifying a $2 \times 2$ sub-game. This gives a polynomial-time algorithm that computes a 0.6528-WSNE, which provides the best known approximate guarantees for WSNEs (Theorem \ref{thm:main}).

It is worth pointing out that, while these techniques are thematically similar to the ones used by the FGSS-algorithm, the actual implementation is significantly different. The FGSS-algorithm attempts to improve the strategies by shifting probabilities \emph{within the supports} of the strategies returned by the two player game, with the goal of reducing the other player's payoff. In our algorithm, we shift probabilities \emph{away from bad strategies} in order to improve that player's payoff. This type of analysis is possible because the base algorithm produces a strategy profile in which one of the two players plays a pure strategy, which makes the analysis we need to carry out much simpler. On the other hand, the KS-algorithm can produce strategies in which both players play many strategies, and so the analysis used for the FGSS-algorithm is necessarily more complicated.

Since our algorithm solves the two LPs separately, it can be used to improve the upon the best known algorithms in the limited communication setting. This is because no communication is required for the row player to solve $(R, -R)$, and the column player to solve $(-C, C)$. The players can then carry out the rest of the algorithm using only poly-logarithmic communication. Hence, we obtain a randomized expected-polynomial-time algorithm that uses poly-logarithmic communication and finds a 0.6528-WSNE (Theorem \ref{thm:commbs}). Moreover, the base algorithm can be implemented as a communication efficient algorithm for finding a $(0.5 + \epsilon)$-WSNE in a \emph{win-lose} bimatrix game, where all payoffs are either 0 or~1 (Theorem \ref{thm:zo-comm}).

The algorithm can also used to beat the best known bound in the query complexity setting. It has already been shown by Goldberg and Roth~\cite{GR14} that an $\epsilon$-NE of a \emph{zero-sum game} can be found by a randomized algorithm that uses $O(\frac{n \log n}{\epsilon^2})$ payoff queries. Since rest of the steps used by our algorithm can also be carried out using $O(n \log n)$ payoff queries, this gives us a query efficient algorithm for finding a 0.6528-WSNE (Theorem \ref{thm:queries}).

We also show that the base algorithm can be adapted to find a $\frac{3 - \sqrt{5}}{2}$-NE in a bimatrix game, which matches the bound given for the first algorithm of Bosse et al.\ \cite{BBM10}. Once again, this can be implemented in a communication efficient manner, and so we obtain an algorithm that computes a $(\frac{3 - \sqrt{5}}{2} + \epsilon)$-NE (i.e., 0.382-NE) using only poly-logarithmic communication (Theorem \ref{thm:epscomm}).

\section{Preliminaries}
\label{sec:pre}

\paragraph{\bf Bimatrix games.}

Throughout the paper, we use $[n]$ to denote the set of integers $\{1, 2, \dots,
n\}$. An $n \times n$ bimatrix game is a pair $(R,C)$ of two $n \times n$
matrices: $R$ gives payoffs for the \emph{row} player, and $C$ gives the payoffs
for the \emph{column} player. We make the standard assumption that all payoffs
lie in the range $[0, 1]$. We also assume that each payoff has constant
bit-length.
A \emph{win-lose} bimatrix game is a
game in which all payoffs are either $0$ or $1$.

Each player has $n$ \emph{pure} strategies. To play the
game, both players simultaneously select a pure strategy: the row player selects
a row $i \in [n]$, and the column player selects a column $j \in [n]$. The row
player then receives payoff $R_{i,j}$, and the column player receives payoff
$C_{i,j}$.

A \emph{mixed strategy} is a probability distribution over $[n]$. We denote a
mixed strategy for the row player as a vector \profx of length $n$, such that
$\profx_i$ is the probability that the row player assigns to pure strategy $i$.
A mixed strategy of the column player is a vector \profy of length $n$, with the
same interpretation. Given a mixed strategy \profx for either player, the
\emph{support} of \profx is the set of pure strategies $i$ with $\profx_i >0$.
If \profx and \profy are mixed strategies for the row and the column player,
respectively, then we call $(\profx, \profy)$ a \emph{mixed strategy profile}.
The expected payoff for the row player under strategy profile $(\profx, \profy)$
is given by $\profx^T R \profy$ and for the column player by $\profx^T C
\profy$. We denote the \emph{support} of a strategy $\profx$ as $\supp(\profx)$,
which gives the set of pure strategies $i$ such that $\profx_i > 0$.

\paragraph{\bf Nash equilibria.}

Let $\profy$ be a mixed strategy for the column player. The set of \emph{pure
best responses} against $\profy$ for the row player is the set of pure
strategies that maximize the payoff against $\profy$. More formally, a pure
strategy $i \in [n]$ is a best response against $\profy$ if, for all pure
strategies $i' \in [n]$ we have: $\sum_{j \in \protect [n]} \profy_j \cdot R_{i,
j} \ge \sum_{j \in \protect [n]} \profy_j \cdot R_{i', j}$. Column player best
responses are defined analogously.

A mixed strategy profile $(\profx, \profy)$ is a \emph{mixed Nash equilibrium}
if every pure strategy in $\supp(\profx)$ is a best response against~$\profy$,
and every pure strategy in $\supp(\profy)$ is a best response against~$\profx$.
Nash~\cite{N51} showed that all bimatrix games have a mixed Nash equilibrium.
Observe that in a Nash equilibrium, each player's expected payoff is equal to
their best response payoff.

\paragraph{\bf Approximate Equilibria.}

There are two commonly studied notions of approximate equilibrium, and we
consider both of them in this paper. The first notion is of an
\emph{$\epsilon$-approximate Nash equilibrium} ($\epsilon$-NE), which weakens the
requirement that a player's expected payoff should be equal to their best
response payoff. Formally, given a strategy profile $(\profx, \profy)$, we
define the \emph{regret} suffered by the row player to be the difference between
the best response payoff, and the actual payoff:
\begin{equation*}
\max_{i\in [n]} \big((R\cdot y)_i\big) - \profx^T \cdot R \cdot \profy.
\end{equation*}
Regret for the column player is defined analogously. We have that $(\profx,
\profy)$ is an $\epsilon$-NE if and only if both players have regret less than
or equal to $\epsilon$.

The other notion is of an $\epsilon$-approximate-well-supported equilibrium
($\epsilon$-WSNE), which weakens the requirement that players only place
probability on best response strategies. Given a strategy profile $(\profx,
\profy)$ and a pure strategy $j \in [n]$, we say that $j$ is an
$\epsilon$-best-response for the row player if:
\begin{equation*}
\max_{i \in [n]}\big((R\cdot y)_i\big) - (R\cdot y)_j \le \epsilon.
\end{equation*}
An $\epsilon$-WSNE requires that both players only place probability on
$\epsilon$-best-responses. Formally,
the row player's \emph{pure strategy regret} under $(\profx, \profy)$ is defined
to be:
\begin{equation*}
\max_{i \in [n]}\big((R\cdot y)_i\big) -
\min_{i \in \supp(\profx)}\big((R\cdot y)_i\big).
\end{equation*}
Pure strategy regret for the column player is defined analogously. A strategy
profile $(\profx, \profy)$ is an $\epsilon$-WSNE if both players have pure
strategy regret less than or equal to $\epsilon$.

\paragraph{\bf Communication complexity.}

We consider the communication model for bimatrix games introduced by Goldberg and
Pastink~\cite{GP14}. In this model, both players know the payoffs in their own
payoff matrix, but do not know the payoffs in their opponent's matrix. The
players then follow an algorithm that uses a number of communication rounds,
where in each round they exchange a single bit of information. Between
each communication round, the players are permitted to perform arbitrary
randomized computations (although it should be noted that, in this paper, the players will
only perform polynomial-time computations) using their payoff matrix, and the
bits that they have received so far. At the end of the algorithm,
the row player
outputs a mixed strategy $\profx$, and the column player outputs a mixed
strategy $\profy$.

The goal is to produce a strategy profile $(\profx, \profy)$ that is
an $\epsilon$-NE or $\epsilon$-WSNE for a sufficiently small $\epsilon$ while
limiting the number of communication rounds used by the algorithm. The
algorithms given in this paper will use at most $O(\log^2 n)$ communication rounds.
In order to achieve this, we use the following result of Goldberg and
Pastink \cite{GP14}.

\begin{lemma}[\cite{GP14}]
\label{lem:GP14}
Given a mixed strategy $\profx$ for the row-player and an $\epsilon >
0$, there is a randomized expected-polynomial-time algorithm that uses
$O(\frac{\log^2 n}{\epsilon^2})$-communication to transmit a strategy $\profx_s$
to the column player where $|\supp(\profx_s)| \in O(\frac{\log n}{\epsilon^2})$
and for every strategy $i \in [n]$ we have:
\begin{equation*}
| (\profx^T \cdot R)_i - (\profx_s^T \cdot R)_i | \le \epsilon.
\end{equation*}
\end{lemma}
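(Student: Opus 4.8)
The plan is to have the row player transmit a compressed description of \profx by sampling pure strategies, then send the indices of the sampled strategies using $O(\log n)$ bits each. Concretely, first I would let the row player draw $k = O(\frac{\log n}{\epsilon^2})$ independent samples $i_1, \ldots, i_k$ from the distribution \profx, and define $\profx_s$ to be the empirical distribution that assigns probability $\frac{1}{k} \cdot |\{t : i_t = j\}|$ to each pure strategy $j$. By construction $|\supp(\profx_s)| \le k \in O(\frac{\log n}{\epsilon^2})$. The row player then communicates $\profx_s$ by sending, for each of the $k$ samples, the binary encoding of the index $i_t \in [n]$; since each index takes $\lceil \log_2 n \rceil$ bits, the total communication is $O(k \log n) = O(\frac{\log^2 n}{\epsilon^2})$ bits, as required. (The randomized computation here is just the sampling; everything after the transmission is deterministic bookkeeping on the column player's side.)

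Next I would establish the accuracy guarantee $|(\profx^T R)_j - (\profx_s^T R)_j| \le \epsilon$ for every column $j$. Fix a column $j$ and observe that $(\profx_s^T R)_j = \frac{1}{k} \sum_{t=1}^{k} R_{i_t, j}$ is an average of $k$ i.i.d.\ random variables, each lying in $[0,1]$ because all payoffs are in $[0,1]$, with common expectation $\mathbb{E}[R_{i_t, j}] = \sum_{i} \profx_i R_{i,j} = (\profx^T R)_j$. A Hoeffding bound then gives $\Pr\big[|(\profx_s^T R)_j - (\profx^T R)_j| > \epsilon\big] \le 2 \exp(-2 k \epsilon^2)$. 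Choosing the constant in $k = O(\frac{\log n}{\epsilon^2})$ large enough makes this probability smaller than $\frac{1}{2n}$, and a union bound over the $n$ columns shows that with probability at least $\frac12$ the approximation holds simultaneously for all $j$. To turn this into a worst-case guarantee rather than a high-probability one, I would have the row player check the sampled $\profx_s$ against her own known matrix $R$ before transmitting — she can compute both $\profx^T R$ and $\profx_s^T R$ locally — and simply resample if the test fails; since each trial succeeds with probability at least $\frac12$, the expected number of trials is at most $2$, which keeps the algorithm expected-polynomial-time and the expected communication within the stated bound while guaranteeing the inequality always holds for the transmitted strategy.

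The only genuinely delicate point is making sure the resampling step does not break the complexity claims: the running time and communication are now random variables, but because the failure probability per trial is bounded by a constant, their expectations are blown up only by a constant factor, so "expected-polynomial-time" and $O(\frac{\log^2 n}{\epsilon^2})$ expected communication both survive. A minor subtlety worth a sentence is that the constant hidden in the sample count $k$ must be chosen to dominate the union bound over all $n$ columns; this is where the $\log n$ (as opposed to a constant) in the support size and the $\log^2 n$ in the communication come from. Everything else — the support-size bound, the per-index encoding cost, and the fact that the column player can reconstruct $\profx_s$ exactly from the received indices — is immediate from the construction.
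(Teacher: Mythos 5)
Your proposal is correct and follows essentially the same route the paper attributes to Goldberg and Pastink: apply the Lipton--Markakis--Mehta sampling technique to obtain a strategy $\profx_s$ with support of size $O(\frac{\log n}{\epsilon^2})$, then transmit it by sending the $O(\frac{\log n}{\epsilon^2})$ sampled indices at $\log n$ bits each. The paper states this as a cited result without a self-contained proof, so your Hoeffding/union-bound analysis and the local verify-and-resample step (which converts the high-probability guarantee into a worst-case accuracy guarantee at the cost of only expected running time and communication) are a reasonable and faithful reconstruction of the argument.
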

The algorithm uses the well-known sampling technique of Lipton, Markakis, and
Mehta to construct the strategy $\profx_s$, so for this reason we will call the
strategy $\profx_s$ the \emph{sampled strategy} from $\profx$. Since this
strategy has a logarithmically sized support, it can be transmitted by sending
$O(\frac{\log n}{\epsilon^2})$ strategy indexes, each of which can be
represented using $\log n$ bits. By symmetry, the algorithm can obviously also
be used to transmit approximations of column player strategies to the row
player.

\paragraph{\bf Query complexity.}

In the query complexity setting, the algorithm
knows that the players will play an $n \times n$ game $(R, C)$, but it does not
know any of the entries of $R$ or $C$. These payoffs are obtained using
\emph{payoff queries} in which the algorithm proposes a pure strategy profile
$(i, j)$, and then it is told the value of $R_{ij}$ and $C_{ij}$. After each
payoff query, the algorithm can make arbitrary computations (although, again, in
this paper the algorithms that we consider take polynomial time) in order to
decide the next pure strategy profile to query. After making a sequence of
payoff queries, the algorithm then outputs a mixed strategy profile $(\profx,
\profy)$. Again, the goal is to ensure that this strategy profile is an
$\epsilon$-NE or $\epsilon$-WSNE, while minimizing the number of queries made
overall.

There are two results that we will use for this setting. Goldberg and Roth~\cite{GR13} have
given a randomized algorithm that, with high probability, finds an $\epsilon$-NE
of a zero-sum game using $O(\frac{n \cdot \log n}{\epsilon^2})$ payoff
queries. Given a mixed strategy profile $(\profx, \profy)$, an
\emph{$\epsilon$-approximate payoff vector} for the row player is a vector $v$
such that, for all $i \in [n]$ we have $| v_i - (R \cdot \profy)_i | \le
\epsilon$. Approximate payoff vectors for the column player are defined
symmetrically. Fearnley and Savani~\cite{FS14} observed that there is a randomized algorithm
that when given the strategy profile $(\profx, \profy)$, finds approximate
payoff vectors for both players using $O(\frac{n \cdot \log n}{\epsilon^2})$
payoff queries and that succeeds with high probability. We summarise
these two results in the following lemma.

\begin{lemma}[\cite{GR13,FS14}]
\label{lem:GRFS}
Given an $n \times n$ zero-sum bimatrix game, with probability at least $(1 -
n^{-\frac{1}{8}})(1 - \frac{2}{n})^2$, we can compute an
$\epsilon$-Nash equilibrium $(\profx, \profy)$, and  $\epsilon$-approximate payoff
vectors for both players under $(\profx, \profy)$, using $O(\frac{n \cdot \log
n}{\epsilon^2})$ payoff queries.
\end{lemma}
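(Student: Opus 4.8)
The plan is to assemble the two cited results, paying attention to how their random choices compose into the stated product bound. First I would run the Goldberg--Roth algorithm~\cite{GR13} on the given $n \times n$ zero-sum game: it uses $O(\frac{n \log n}{\epsilon^2})$ payoff queries and, with probability at least $1 - n^{-1/8}$, returns a profile $(\profx, \profy)$ that is an $\epsilon$-Nash equilibrium of that game. Call this event $E_0$, and fix the output $(\profx, \profy)$ from here on.

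Next, to obtain an $\epsilon$-approximate payoff vector for the row player under $(\profx, \profy)$, I would use the sampling technique of Lipton, Markakis, and Mehta as employed by Fearnley and Savani~\cite{FS14}: draw $k = O(\frac{\log n}{\epsilon^2})$ columns i.i.d.\ from the distribution $\profy$, query all $n$ entries of $R$ in each sampled column (a total of $O(\frac{n \log n}{\epsilon^2})$ queries), and let $v_i$ be the empirical average of $R_{i,j}$ over the sampled columns $j$. For a fixed row $i$, $v_i$ is the average of $k$ independent $[0,1]$-valued variables with mean $(R\profy)_i$, so a Hoeffding bound gives $\Pr[\,|v_i - (R\profy)_i| > \epsilon\,] \le 2 e^{-2 k \epsilon^2}$; choosing the hidden constant in $k$ so that $2 e^{-2 k \epsilon^2} \le \tfrac{2}{n^2}$ and taking a union bound over the $n$ rows makes $v$ an $\epsilon$-approximate payoff vector for the row player with probability at least $1 - \tfrac{2}{n}$. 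Call this event $E_1$. Running the symmetric procedure for the column player (sampling rows from $\profx$, querying full rows of $C$) costs a further $O(\frac{n \log n}{\epsilon^2})$ queries and yields an $\epsilon$-approximate payoff vector for the column player with probability at least $1 - \tfrac{2}{n}$; call this event $E_2$.

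Finally I would note that the coins driving $E_0$, $E_1$, and $E_2$ can be taken mutually independent: the Goldberg--Roth run uses its own randomness, and the two sampling steps use fresh randomness conditioned on the now-fixed output $(\profx, \profy)$. Hence $\Pr[E_0 \cap E_1 \cap E_2] \ge (1 - n^{-1/8})(1 - \tfrac{2}{n})^2$, and on this event $(\profx, \profy)$ together with the two vectors have exactly the claimed properties. The total number of queries is $O(\frac{n \log n}{\epsilon^2}) + O(\frac{n \log n}{\epsilon^2}) + O(\frac{n \log n}{\epsilon^2}) = O(\frac{n \log n}{\epsilon^2})$, which establishes the lemma.

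There is no genuine obstacle here, since this is essentially a packaging of prior work; the only points needing care are (i) the conditional-independence bookkeeping, so that the three success probabilities multiply into the stated product form $(1 - n^{-1/8})(1 - \tfrac{2}{n})^2$ rather than being combined by a cruder union bound, and (ii) verifying that the $n$-fold union bound over rows (resp.\ columns) in the Hoeffding step is absorbed by the $\log n$ factor in $k$, so that the aggregate failure probability of each sampling step is at most $\tfrac{2}{n}$.
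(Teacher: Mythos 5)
Your proposal is correct and follows the same route the paper itself takes: the paper offers no proof beyond citing Goldberg and Roth~\cite{GR13} for the $\epsilon$-NE of the zero-sum game (accounting for the $1 - n^{-1/8}$ factor) and Fearnley and Savani~\cite{FS14} for the two $\epsilon$-approximate payoff vectors (accounting for the $(1 - \frac{2}{n})^2$ factor), and you have simply unpacked how those two results compose. Your attention to the conditional-independence bookkeeping and to absorbing the $n$-fold union bound into the $\log n$ factor of the sample size is the right sanity check, and nothing in it conflicts with the paper's use of the lemma.
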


\section{The base algorithm}

In this section, we introduce an algorithm that we call the \emph{base
algorithm}. This algorithm provides a simple way to find a $\frac{2}{3}$-WSNE.
We present this algorithm separately for three reasons. Firstly, we believe that
the algorithm is interesting in its own right, since it provides a relatively
straightforward method for finding a $\frac{2}{3}$-WSNE that is quite different
from the technique used in the KS-algorithm. Secondly, our algorithm for finding
a $0.6528$-WSNE will replace the final step of the algorithm with two more
involved procedures, so it is worth understanding this algorithm before we
describe how it can be improved. Finally, at the end of this section, we will
show that this algorithm can be adapted to provide a communication efficient way
to find a $(0.5 + \epsilon)$-WSNE in win-lose games.

\paragraph{\bf The algorithm.} Consider the following algorithm.

\alg{alg:base}
\begin{tcolorbox}[title=Algorithm~\ref{alg:base}]
\begin{enumerate}
\item
\label{itm:cfj-one}
Solve the zero-sum games $(R, -R)$ and $(-C, C)$.
\begin{itemize}
\item Let $(\profx^*, \profy^*)$ be
a NE of $(R, -R)$, and let $(\hat{\profx}, \hat{\profy})$ be a NE of $(C, -C)$.
\item
Let $v_r$ be the value secured by $\profx^*$ in $(R, -R)$, and let
$v_c$ be the value secured by $\hat{\profy}$ in $(-C, C)$. Without loss of
generality assume that $v_c \le v_r$.
\end{itemize}
\item
\label{itm:cfj-two}
If $v_r \leq 2/3$, then return $(\hat{\profx}, \profy^*)$.
\item
\label{itm:cfj-three}
If for all $j \in [n]$ it holds that $C^T_j \cdot \profx^* \leq 2/3$, then
return $(\profx^*, \profy^*)$.
\item
\label{itm:cfj-four}
Otherwise:
\begin{itemize}
\item Let $\js$ be a pure best response to $\profx^*$.
\item Find a row $i$ such that $R_{i\js} > 1/3$ and $C_{i\js} > 1/3$.
\item Return $(i, \js)$.
\end{itemize}
\end{enumerate}
\end{tcolorbox}

We argue that this algorithm is correct. Firstly, we must prove that the
row $i$ used in Step~\ref{itm:cfj-four} actually exists, which we do in the
following lemma.

\begin{lemma}
\label{lem:basepure}
If Algorithm~\ref{alg:base} reaches Step~\ref{itm:cfj-four},
then there exists a row $i$ such that $R_{i\js} > 1/3$ and $C_{i\js} > 1/3$.
\end{lemma}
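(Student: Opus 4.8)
First I would unpack what it means to reach Step~\ref{itm:cfj-four}. At that point we know $v_r > 2/3$ (else Step~\ref{itm:cfj-two} fired), and we know there exists some column $j$ — namely the best response $\js$ to $\profx^*$ in $C$ — with $C^T_{\js} \cdot \profx^* > 2/3$ (else Step~\ref{itm:cfj-three} fired). So the plan is to exploit these two inequalities simultaneously against the \emph{same} column $\js$. The key structural fact I would use is that $\profx^*$ is the maximin strategy of the row player in $(R,-R)$, securing value $v_r > 2/3$; in particular, against \emph{every} column, and hence against column $\js$ specifically, the row player's expected payoff under $\profx^*$ satisfies $R_{\js} \cdot \profx^* \ge v_r > 2/3$. (I need to be slightly careful about whether $v_r$ is defined via $\profx^*$ or $\profy^*$; the excerpt says $v_r$ is the value secured by $\profx^*$, so $(R^T \profx^*)_j \ge v_r$ for all $j$, which is exactly what I want.)

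**The averaging argument.**

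Now I have two facts about the single column $\js$: $\sum_i \profx^*_i R_{i\js} > 2/3$ and $\sum_i \profx^*_i C_{i\js} > 2/3$. I would add these: $\sum_i \profx^*_i (R_{i\js} + C_{i\js}) > 4/3$. Since $\sum_i \profx^*_i = 1$, this is a weighted average of the quantities $R_{i\js} + C_{i\js}$ over $i \in \supp(\profx^*)$, and the average exceeds $4/3$, so there must exist at least one row $i$ (in the support of $\profx^*$, though we only need existence) with $R_{i\js} + C_{i\js} > 4/3$. The final step is to argue that $R_{i\js} + C_{i\js} > 4/3$ together with both payoffs lying in $[0,1]$ forces $R_{i\js} > 1/3$ and $C_{i\js} > 1/3$: if $R_{i\js} \le 1/3$ then $C_{i\js} > 4/3 - 1/3 = 1$, contradicting $C_{i\js} \le 1$, and symmetrically for $C_{i\js}$. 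This yields the desired row $i$.

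**Anticipated obstacle.**

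The only genuinely delicate point is getting the first inequality $R_{\js} \cdot \profx^* > 2/3$ right — i.e., confirming that $\js$, defined as a best response in the \emph{column} player's matrix $C$, still gets the row player at least $v_r$ against $\profx^*$. This follows purely from $\profx^*$ being a maximin strategy securing $v_r$ in $(R,-R)$: maximin means $\min_j (R^T \profx^*)_j \ge v_r$, which holds for \emph{every} column regardless of why that column was selected. So there is no real obstacle here, just a need to state the maximin property cleanly. Everything else is the two-line averaging-plus-pigeonhole argument above; I do not expect any computational difficulty.
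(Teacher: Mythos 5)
Your proof is correct but takes a genuinely different route from the paper's. The paper samples a row $i$ from $\profx^*$ at random, applies Markov's inequality twice — once to $T = 1 - R_{i\js}$ using $v_r > 2/3$, once to $1 - C_{i\js}$ using $C^T_{\js}\profx^* > 2/3$ — to bound each bad event $\{R_{i\js} \le 1/3\}$ and $\{C_{i\js} \le 1/3\}$ strictly below probability $1/2$, and then invokes the union bound to get positive probability that neither occurs. You instead add the two expectation inequalities to obtain $\sum_i \profx^*_i (R_{i\js} + C_{i\js}) > 4/3$, apply a deterministic averaging/pigeonhole argument to find a row with $R_{i\js} + C_{i\js} > 4/3$, and then use the $[0,1]$ payoff range to force both coordinates above $1/3$. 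The two arguments exploit exactly the same three facts ($v_r > 2/3$, the failure of Step~\ref{itm:cfj-three} applied to $\js$, and bounded payoffs), but you combine the two constraints \emph{before} doing the pigeonhole step, whereas the paper pigeonholes twice via Markov and then recombines via union bound. Your version is a bit cleaner and avoids the probabilistic framing entirely; the paper's choice is likely motivated by consistency with the proof of Lemma~\ref{lem:bs} in the appendix, which relies repeatedly on Markov-plus-union-bound arguments where the separate bounds are not symmetric and a single summed inequality would not suffice. One small point you correctly flagged and resolved: it is essential that $\js$ is only selected for its role in $C$, yet the row-player bound $(\profx^{*T} R)_{\js} \ge v_r$ still holds because $\profx^*$ is maximin in $(R,-R)$ and thus secures $v_r$ against \emph{every} column.
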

\begin{proof}
Let $i$ be a row sampled from $\profx^*$. We will show that there is a positive
probability that row $i$ satisfies the desired properties.

We begin by showing that
the probability that $\pr(R_{i\js} \le \frac{1}{3}) < 0.5$. Let the random
variable $T = 1 - R_{i\js}$. Since $v_r > \frac{2}{3}$, we have that $E[T] <
\frac{1}{3}$. Thus, applying Markov's inequality we obtain:
\begin{equation*}
\pr(T \ge \frac{2}{3}) \le \frac{E[T]}{2/3} < 0.5.
\end{equation*}
Since $\pr(R_{i\js} \le \frac{1}{3}) = \pr(T \ge \frac{2}{3})$ we can
therefore conclude that $\pr(R_{i\js} \le \frac{1}{3}) < 0.5$. The exact same
technique can be used to prove that $\pr(C_{i\js} \le \frac{1}{3})
< 0.5$, by using the fact that $C^T_{\js} \cdot \profx^* > \frac{2}{3}$.

We can now apply the union bound to argue that:
\begin{equation*}
\pr(R_{i\js} \le \frac{1}{3} \text{ or } C_{i\js} \le \frac{1}{3}) < 1.
\end{equation*}
Hence, there is positive probability that row $i$ satisfies $R_{i\js} >
\frac{1}{3}$ and $C_{i\js} > \frac{1}{3}$, so such a row must exist. \qed
\end{proof}

We now argue that the algorithm always produces a $\frac{2}{3}$-WSNE. There are
three possible strategy profiles that can be returned by the algorithm, which we
consider individually.
\begin{description}
\item[Step~\ref{itm:cfj-two}.] Since $v_c \le v_r$ by assumption, and since $v_r \le
\frac{2}{3}$, we have that $(R \cdot \profy^*)_i \le \frac{2}{3}$ for
every row $i$, and $((\hat{\profx})^T \cdot C)_j \le \frac{2}{3}$ for every
column~$j$. So, both players can have pure strategy regret at most $\frac{2}{3}$ in $(\hat{\profx}, \profy^*)$, and thus this profile is a $\frac{2}{3}$-WSNE.
\item[Step~\ref{itm:cfj-three}.]
Much like in the previous case, when the column
player plays $\profy^*$, the row player can have pure strategy regret at most
$\frac{2}{3}$. The requirement that
$C^T_j\profx^* \leq \frac{2}{3}$ also ensures that the column player has
pure strategy regret at most $\frac{2}{3}$. Thus, we have that $(\profx^*,
\profy^*)$ is a $\frac{2}{3}$-WSNE.
\item[Step~\ref{itm:cfj-four}.] Both players have payoff at least $\frac{1}{3}$
under $(i, \js)$ for the sole strategy in their respective supports. Hence, the
maximum pure strategy regret that can be suffered by a player is $1 -
\frac{1}{3} = \frac{2}{3}$.
\end{description}

\noindent Therefore, we have show that the algorithm always produces a
$\frac{2}{3}$-WSNE.

\paragraph{\bf Win-lose games.}

The base algorithm can be adapted to provide a communication efficient method
for finding a $(0.5 + \epsilon)$-WSNE in win-lose games. In brief, the algorithm
can be modified to find a $0.5$-WSNE in a win-lose game by making
Steps~\ref{itm:cfj-two} and~\ref{itm:cfj-three} check against the threshold of
$0.5$. It can then be shown that if these steps fail, then there exists a pure
Nash equilibrium in column $\js$. This can then be implemented as a
communication efficient protocol using the algorithm from Lemma~\ref{lem:GP14}.
Full details are given in Appendix~\ref{app:zero-one}, where the following
theorem is proved.

\begin{theorem}
\label{thm:zo-comm}
For every win-lose game and every $\epsilon > 0$, there is a randomized
expected-polynomial-time algorithm that uses $O\left(\frac{\log^2
n}{\epsilon^2}\right)$ communication and finds a $(0.5 + \epsilon)$-WSNE.
\end{theorem}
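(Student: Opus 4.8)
The plan is to run Algorithm~\ref{alg:base} with the threshold $\tfrac23$ replaced by $\tfrac12$ in Steps~\ref{itm:cfj-two} and~\ref{itm:cfj-three}, to show that on a win-lose game this produces a $\tfrac12$-WSNE, and then to implement the whole procedure with $O(\log^2 n/\epsilon^2)$ communication at the price of an extra additive $\epsilon$.

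Correctness of the $\tfrac12$-WSNE follows the analysis given after Algorithm~\ref{alg:base} with $\tfrac23$ replaced by $\tfrac12$ throughout. In Step~\ref{itm:cfj-two}, $v_c\le v_r\le\tfrac12$ forces both players' pure-strategy regret under $(\hat{\profx},\profy^*)$ to be at most $\tfrac12$; in Step~\ref{itm:cfj-three}, the row player plays a best response $\profx^*$ to $\profy^*$ and so has zero pure-strategy regret, while the test $(\profx^{*T}C)_j\le\tfrac12$ bounds the column player's by $\tfrac12$. The one new ingredient is Step~\ref{itm:cfj-four}: there $v_r>\tfrac12$ and $\js$ is a column best response to $\profx^*$ with $(\profx^{*T}C)_{\js}>\tfrac12$, so, since all payoffs are $0/1$, a row $i$ drawn from $\profx^*$ satisfies $\pr(R_{i\js}=1)=(\profx^{*T}R)_{\js}\ge v_r>\tfrac12$ and $\pr(C_{i\js}=1)=(\profx^{*T}C)_{\js}>\tfrac12$. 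This is the argument of Lemma~\ref{lem:basepure} with the Markov threshold on $0/1$ payoffs sharpened from $\tfrac13$ to $1$, and a union bound produces a row $i$ with $R_{i\js}=C_{i\js}=1$; then $(i,\js)$ is a pure Nash equilibrium of $(R,C)$, since $i$ best responds to $\js$ ($R_{i\js}=1$ is maximal) and $\js$ best responds to $i$ ($C_{i\js}=1$ is maximal), hence a $0$-WSNE.

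For the protocol, the row player solves $(R,-R)$ and the column player solves $(-C,C)$ locally with no communication, and they exchange $\tfrac{\epsilon}{4}$-accurate estimates of $v_r$ and $v_c$ (a constant number of bits); comparing these and swapping the roles of the two players if necessary, we may assume $v_c\le v_r$ up to this rounding. If the estimate of $v_r$ is at most $\tfrac12+3\epsilon$, the players use Lemma~\ref{lem:GP14} with parameter $\Theta(\epsilon)$ to transmit the sampled strategies needed to output the Step~\ref{itm:cfj-two} profile. Otherwise $v_r>\tfrac12$: the row player sends the sampled strategy $\profx^*_s$ to the column player, who computes $M=\max_j(\profx^{*T}_sC)_j$. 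If $M\le\tfrac12+\epsilon$, the players output the Step~\ref{itm:cfj-three} profile --- here the row player outputs the \emph{exact} $\profx^*$ and sends the sampled $\profy^*_s$ for the column player to play. If $M>\tfrac12+\epsilon$ we are in Step~\ref{itm:cfj-four}: let $\js$ be a column best response to $\profx^*_s$; the column player sends $\js$ together with the set $T=\{i\in\supp(\profx^*_s):C_{i\js}=1\}$ to the row player, who returns some $i\in T$ with $R_{i\js}=1$, and $(i,\js)$ is output.

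It remains to bound the approximation and the communication. Since the sampled strategies perturb every entry of the relevant payoff vectors by at most $\Theta(\epsilon)$ and their supports are contained in the originals, the Step~\ref{itm:cfj-two} and Step~\ref{itm:cfj-three} outputs are $(\tfrac12+O(\epsilon))$-WSNE, and rescaling $\epsilon$ gives the stated bound; the Step~\ref{itm:cfj-four} output is an exact equilibrium. The communication is $O(\log\tfrac{1}{\epsilon})$ bits for the values, $O(1)$ invocations of Lemma~\ref{lem:GP14} at $O(\log^2 n/\epsilon^2)$ bits each, and $O\big(\tfrac{\log n}{\epsilon^2}\cdot\log n\big)$ bits to send $\js$ and $T$, hence $O(\log^2 n/\epsilon^2)$ in total; all local computation (two zero-sum LPs and the sampling routine) runs in expected polynomial time. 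The only real obstacle is Step~\ref{itm:cfj-four}, and it dictates the slack in the thresholds: the $\epsilon$ error of Lemma~\ref{lem:GP14} must not destroy $(\profx^{*T}_sR)_{\js}\ge v_r-\epsilon>\tfrac12$ and $(\profx^{*T}_sC)_{\js}=M>\tfrac12$, which say that $\{i:R_{i\js}=1\}$ and $\{i:C_{i\js}=1\}$ each carry more than half the mass of $\profx^*_s$; their intersection is then nonempty and, lying in $\supp(\profx^*_s)$, is contained in $T$ --- so the column player can locate it by transmitting the logarithmic-size set $T$, rather than paying the $\Omega(n)$ that a naive search for a common good row (set intersection) would cost.
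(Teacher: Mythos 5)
Your proposal is correct and follows essentially the same route as the paper's: modify Algorithm~\ref{alg:base} with a $\tfrac12$ threshold, use a Markov/union-bound argument sharpened by the $0/1$ payoffs to show Step~\ref{itm:cfj-four} always yields a pure Nash equilibrium, and implement the whole thing with Lemma~\ref{lem:GP14} plus an $O(\log^2 n/\epsilon^2)$-bit exchange over the logarithmic support. The only differences from the paper's appendix proof are cosmetic: you have the column player send the set $T=\{i\in\supp(\profx^*_s):C_{i\js}=1\}$ whereas the paper has the row player send the values $R_{i\js_s}$ for $i\in\supp(\profx^*_s)$ (symmetric variants of the same idea), and your Step~\ref{itm:cfj-three} output $(\profx^*,\profy^*_s)$ is if anything a cleaner choice than the paper's, with the same approximation guarantee; the explicit $3\epsilon$ slack you build into the Step~\ref{itm:cfj-two} threshold is a careful but equivalent way of handling the sampling error that the paper handles implicitly.
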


\section{An algorithm for finding a $0.6528$-WSNE}

In this section, we show how Algorithm~\ref{alg:base} can be modified to
produce a $0.6528$-WSNE. We begin by giving an overview of the techniques used,
we then give the algorithm, and finally we analyse the quality of WSNE that it
produces.

\paragraph{\bf Outline.}
The idea behind our algorithm is to replace Step~\ref{itm:cfj-four} of
Algorithm~\ref{alg:base} with a more involved procedure. This procedure uses two
techniques, that both find an $\epsilon$-WSNE with $\epsilon < \frac{2}{3}$.

Firstly, we attempt to turn $(\profx^*, \js)$ into a WSNE by
\emph{shifting probabilities}. Observe that, since $\js$ is a best response, the
column player has a pure strategy regret of $0$ in $(\profx^*, \js)$. On the
other hand, we have no guarantees about the row player since $\profx^*$ might
place a small amount of probability strategies with payoff strictly less than
$\frac{1}{3}$. However, since $\profx^*$ achieves a high \emph{expected} payoff
(due to Step~\ref{itm:cfj-two},)
it cannot place too much probability on these low payoff strategies. Thus, the
idea is to shift the probability that $\profx^*$ assigns to entries of $\js$
with payoff less than or equal to $\frac{1}{3}$ to entries with payoff strictly
greater than $\frac{1}{3}$, and thus ensure that the row player's pure strategy
regret is below $\frac{2}{3}$. Of course, this procedure will increase the pure
strategy regret of the column player, but if it is also below $\frac{2}{3}$ once
all probability has been shifted, then we have found an $\epsilon$-WSNE with
$\epsilon < \frac{2}{3}$.

If shifting probabilities fails to find an $\epsilon$-WSNE with $\epsilon
< \frac{2}{3}$, then we show that the game contains a \emph{matching pennies}
sub-game. More precisely, we show that there exists a column $\jp$, and rows $b$
and $s$ such that the $2 \times 2$ sub-game induced by $\js$, $\jp$, $b$, and
$s$ has the following form:

\begin{center}
\bimatrixgame{3.7mm}{2}{2}{I}{II}%
{{$b$}{$s$}}%
{{$\js$}{$\jp$}}%
{
\payoffpairs{1}{{$\approx1$}{$0$}}{{$0$}{$\approx 1$}}
\payoffpairs{2}{{$0$}{$\approx 1$}}{{$\approx 1$}{$0$}}
}
\end{center}
Thus, if both players play uniformly over their respective pair of strategies,
then $\js$, $\jp$, $b$, and $s$ with have payoff $\approx 0.5$, and so this
yields an $\epsilon$-WSNE with $\epsilon < \frac{2}{3}$.

\paragraph{\bf The algorithm.}
We now formalize this approach, and show that it always finds an $\epsilon$-WSNE
with $\epsilon < \frac{2}{3}$. In order to quantify the precise $\epsilon$ that
we obtain, we parametrise the algorithm by a variable~$z$, which we constrain to
be in the range $0 \le z < \frac{1}{24}$. With the exception of the matching
pennies step, all other steps of the algorithm will return a $(\frac{2}{3} -
z)$-WSNE, while the matching pennies step will return a $(\frac{1}{2} +
f(z))$-WSNE for some increasing function $f$. Optimizing the trade off between
$\frac{2}{3} - z$ and $\frac{1}{2} + f(z)$ then allows us to determine the
quality of WSNE found by our algorithm.

The algorithm is displayed as Algorithm~\ref{alg:ws}. Observe that Steps~\ref{itm:one},~\ref{itm:two},
and~\ref{itm:three} are versions of the corresponding steps from
Algorithm~\ref{alg:base}, which have been adapted to produce a
$(\frac{2}{3}-z)$-WSNE. Step~\ref{itm:four} implements the probability shifting
procedure, while Step~\ref{itm:five} finds a matching pennies sub-game.

\alg{alg:ws}
\begin{figure}
\begin{tcolorbox}[title=Algorithm~\ref{alg:ws}]
\begin{enumerate}
\itemsep2mm
\item
\label{itm:one}
Solve the zero-sum games $(R, -R)$ and $(-C, C)$.
\begin{itemize}
\item Let $(\profx^*, \profy^*)$ be
a NE of $(R, -R)$, and let $(\hat{\profx}, \hat{\profy})$ be a NE of $(C, -C)$.
\item
Let $v_r$ be the value secured by $\profx^*$ in $(R, -R)$, and let
$v_c$ be the value secured by $\hat{\profy}$ in $(-C, C)$. Without loss of
generality assume that $v_c \le v_r$.
\end{itemize}
\item
\label{itm:two}
If $v_r \leq 2/3 - z$, then return $(\hat{\profx}, \profy^*)$.
\item
\label{itm:three}
If for all $j \in [n]$ it holds that $C^T_j\profx^* \leq 2/3 - z$, then
return $(\profx^*, \profy^*)$.
\item
\label{itm:four}
Otherwise:
\begin{itemize}
\item Let $\js$ be a pure best response against $\profx^*$. Define:
\begin{align*}
\sr  &:= \{i \in \supp(\xs): R_{i\js} < 1/3 + z\}  \\
\brs &:= \supp(\xs) \setminus \sr
\end{align*}
\item Define the strategy $\xu$ as follows. For each $i \in [n]$ we have:
\begin{equation*}
(\xu)_i = \begin{cases}
\frac{1}{\pr(\brs)} \cdot \xs_i & \text{if $i \in \brs$} \\
0 & \text{otherwise.}
\end{cases}
\end{equation*}
\item If $(\xu^T \cdot C)_{\js} \ge \frac{1}{3} + z$, then return $(\xu, \js)$.
\end{itemize}
\item \label{itm:five}
Otherwise:
\begin{itemize}
\item Let $\jp$ be a pure best response against $\xu$.
\item
If there exists an $i \in \supp(\xs)$ such that $(i, \js)$ or $(i, \jp)$ is a
pure $(\frac{2}{3}-z)$-WSNE, then return it.
\item Find a row $b \in \brs$ such that $R_{b\js} > 1- \frac{18z}{1 + 3z}$ and
$C_{b\jp} > 1- \frac{18z}{1 + 3z}$.
\item
Find a row $s \in \sr$ such that $C_{s\js} > 1- \frac{27z}{1 + 3z}$ and
$R_{s\jp} > 1- \frac{27z}{1 + 3z}$.
\item
Define the row player strategy $\xmp$ and the column player strategy $\ymp$ as
follows. For each $i \in [n]$ we have:
\begin{align*}
\xmp_i = \begin{cases}
\frac{1 - 24z}{2 - 39z} & \text{if $i = b$,} \\
\frac{1 - 15z}{2 - 39z} & \text{if $i = s$,} \\
0 & \text{otherwise.}
\end{cases}& &
\ymp_i = \begin{cases}
\frac{1 - 24z}{2 - 39z} & \text{if $i = \js$,} \\
\frac{1 - 15z}{2 - 39z} & \text{if $i = \jp$,} \\
0 & \text{otherwise.}
\end{cases}
\end{align*}
\item Return $(\xmp, \ymp)$.
\end{itemize}
\end{enumerate}
\end{tcolorbox}
\end{figure}

Observe that the probabilities used in $\xmp$ and $\ymp$ are only well
defined when $z \le \frac{1}{24}$, because we have that $\frac{1-15z}{2-39z} >
1$ whenever $z > \frac{1}{24}$, which explains our required upper bound on $z$.

\paragraph{\bf The correctness of Step~\ref{itm:five}.}
This step of the algorithm relies on the existence of the rows $b$ and $s$,
which is not at all trivial. This is shown in the following lemma. The proof of
this lemma is quite lengthy, and is given in full detail in
Appendix~\ref{app:bs}.

\begin{lemma}
\label{lem:bs}
Suppose that the following three conditions hold:
\begin{enumerate}
\item
\label{itm:pre-one}
$\profx^*$ has payoff at least $\frac{2}{3} - z$ against $\js$.
\item
\label{itm:pre-two}
$\js$ has payoff at least $\frac{2}{3} - z$ against $\profx^*$.
\item
\label{itm:pre-three}
$\profx^*$ has payoff at least $\frac{2}{3} - z$ against $\jp$.
\item
\label{itm:pre-four}
Neither $\js$ or $\jp$ contains a pure $(\frac{2}{3} - z)$-WSNE $(i, j)$
with $i \in \supp(\profx^*)$.
\end{enumerate}
Then, both of the following are true:
\begin{itemize}
\item There exists a row $b \in \brs$ such that $R_{b\js} > 1- \frac{18z}{1 + 3z}$ and
$C_{b\jp} > 1- \frac{18z}{1 + 3z}$.
\item
There exists a row $s \in \sr$ such that $C_{s\js} > 1- \frac{27z}{1 + 3z}$ and
$R_{s\jp} > 1- \frac{27z}{1 + 3z}$.
\end{itemize}
\end{lemma}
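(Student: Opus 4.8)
The plan is to analyse what happens when Step~\ref{itm:four} fails, i.e. when $(\xu^T\cdot C)_{\js}<\frac13+z$, and extract the rows $b$ and $s$ by averaging arguments similar in spirit to Lemma~\ref{lem:basepure}. First I would set up notation: let $p=\pr(\brs)$ be the probability mass that $\xs$ puts on $\brs=\{i\in\supp(\xs):R_{i\js}\ge 1/3+z\}$, so $\xu$ is $\xs$ conditioned on $\brs$. Using condition~\ref{itm:pre-one} ($\xs$ has payoff at least $\frac23-z$ against $\js$) together with the fact that rows in $\sr$ contribute payoff at most $\frac13+z$ and rows in $\brs$ contribute at most $1$, I would derive a lower bound on $p$; this is the standard ``most of the mass must be on good rows'' computation, and it should give something like $p\ge \frac{1-6z}{2-3z}$ (the exact constant is what feeds the $\frac{18z}{1+3z}$ and $\frac{27z}{1+3z}$ thresholds later). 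Symmetrically, the failure of Step~\ref{itm:four}, $(\xu^TC)_{\js}<\frac13+z$, says that along column $\js$ the conditional-on-$\brs$ expected column payoff is small; combined with condition~\ref{itm:pre-two} ($\js$ has payoff $\ge\frac23-z$ against all of $\xs$, i.e. $C^T_{\js}\xs\ge\frac23-z$), this forces a large amount of $C_{i\js}$-mass to sit on $\sr$, which will be the source of the row $s$ with $C_{s\js}$ close to $1$.

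The core of the argument is then four one-sided averaging estimates, two for producing $b\in\brs$ and two for producing $s\in\sr$. For $b$: I want a single row in $\brs$ with \emph{both} $R_{b\js}$ close to $1$ and $C_{b\jp}$ close to $1$. I would bound, over $i$ drawn from $\xu$, the quantities $E[1-R_{i\js}]$ (small because $\xs$'s payoff against $\js$ is high and $\sr$-mass has been removed, which only helps) and $E[1-C_{i\jp}]$ (small because $\jp$ is a best response against $\xu$, so $(\xu^TC)_{\jp}$ is at least the value $v_c$-type bound, hence close to $\frac23$ — here condition~\ref{itm:pre-three} or the best-response choice of $\jp$ is used). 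Markov's inequality on each, plus a union bound, shows the ``bad'' event has probability $<1$ provided the thresholds are set to $1-\frac{18z}{1+3z}$; hence a good $b$ exists. The row $s$ is obtained the same way but averaging $i$ over the \emph{conditional distribution of $\xs$ restricted to $\sr$}: $E[1-C_{i\js}]$ is small because the $C_{i\js}$-mass concentrated on $\sr$ (from the failure of Step~\ref{itm:four}), and $E[1-R_{i\jp}]$ is small because $\xs$ has payoff $\ge\frac23-z$ against $\jp$ by condition~\ref{itm:pre-three}, and removing the $\brs$-part (which has payoff $\le 1$) can only... actually here one must be careful, since restricting to $\sr$ could \emph{lower} the average $R_{i\jp}$; this is where condition~\ref{itm:pre-four} enters — if no good $s$ existed, every $i\in\sr$ would be ``bad'' in column $\js$ or column $\jp$, and one shows such a row together with $\js$ or $\jp$ would already be a pure $(\frac23-z)$-WSNE, contradicting condition~\ref{itm:pre-four}. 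The looser threshold $\frac{27z}{1+3z}$ for $s$ (vs.\ $\frac{18z}{1+3z}$ for $b$) reflects that the $\sr$-conditional distribution has smaller mass and hence worse Markov constants.

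I expect the main obstacle to be the bookkeeping around condition~\ref{itm:pre-four} and the interaction between the two restricted distributions: the clean Markov-plus-union-bound template works immediately for $b$, but for $s$ the relevant expectation $E_{i\sim\xs|_{\sr}}[1-R_{i\jp}]$ is not directly controlled by any single hypothesis, so the argument has to be rephrased as ``either a good $s$ exists, or we find a pure WSNE, contradicting \ref{itm:pre-four}.'' Getting the constants to line up so that the contradiction threshold is exactly $\frac23-z$ and the surviving row satisfies exactly $1-\frac{27z}{1+3z}$ will require tracking the denominators $p$ and $\pr(\sr)=1-p$ carefully; since $z<\frac1{24}$ these stay bounded away from $0$, which is what makes the Markov bounds non-vacuous. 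The remaining computations — plugging the lower bound on $p$ into each Markov inequality and simplifying $(1-6z)/(2-3z)$-type fractions to the stated $18z/(1+3z)$ and $27z/(1+3z)$ forms — are routine algebra that I would defer to the appendix, which is exactly what the paper does.
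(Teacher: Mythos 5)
Your high-level framework is right --- the paper does indeed prove the existence of $b$ and $s$ via Markov-plus-union-bound on the two conditional distributions $\xu$ and $\xl$, feeding in four expectation estimates --- but two of the four estimates you sketch don't go through as stated, and the way you use condition~\ref{itm:pre-four} is not the way it is actually needed.

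For $b$: you claim $E_{i\sim\xu}[1-C_{i\jp}]$ is small because $(\xu^T C)_{\jp}$ is ``close to $\frac{2}{3}$'' via a $v_c$-type bound. That is far too weak. With $(\xu^T C)_{\jp}\approx\frac23$ you get $E[1-C_{i\jp}]\approx\frac13$, and Markov gives $\pr\bigl(C_{i\jp}\le 1-\tfrac{18z}{1+3z}\bigr)\le\frac{1/3}{18z/(1+3z)}$, which blows up as $z\to0$ and certainly is not $<\frac12$. What the paper actually shows (Lemma~\ref{lem:bcjp}) is $(\xu^T C)_{\jp}>\frac{1-6z}{1+3z}$, i.e.\ close to $1$, and this uses the \emph{failure of Step~\ref{itm:four}} in an essential and non-obvious way: $(\xu^TC)_{\js}<\frac13+z$ together with a separately derived \emph{lower} bound $(\xu^T C)_{\js}>\frac13\cdot\frac{1-21z+9z^2}{1+3z}$ forces the gap $(\xu^T C)_{\jp}-(\xu^T C)_{\js}>\frac23-z$ and hence $(\xu^T C)_{\jp}\approx 1$. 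Your proposal never mentions Step~\ref{itm:four}'s failure as an input, and without it there is no way to get $(\xu^T C)_{\jp}$ up to $1$. Relatedly, to convert the payoff sums into bounds on the conditional expectations you need an \emph{upper} bound on $\pr(\brs)$ (Lemma~\ref{lem:bprob}), not just the lower bound you derive; that upper bound is where condition~\ref{itm:pre-four} first enters (it forces $C_{i\js}<\frac13+z$ on all of $\brs$, and condition~\ref{itm:pre-two} then caps $\pr(\brs)$).

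For $s$: you correctly flag that $E_{i\sim\xl}[1-R_{i\jp}]$ is not controlled by any single hypothesis, but the repair you propose --- ``if no good $s$ exists then every $i\in\sr$ is bad in $\js$ or $\jp$, hence forms a pure WSNE, contradicting \ref{itm:pre-four}'' --- does not work: a row being \emph{bad} (low payoff) in a column is the opposite of what a pure WSNE requires, so there is no contradiction to extract. The paper's route is different and is the real content of the lemma: use Markov on the (just-established) fact $(\xu^T C)_{\jp}\approx1$ to show most $\xu$-mass sits on rows with $C_{i\jp}>\frac13+z$; then condition~\ref{itm:pre-four} (applied to $\jp$) forces those rows to satisfy $R_{i\jp}<\frac13+z$, which gives an \emph{upper} bound $(\xu^T R)_{\jp}<\frac13\cdot\frac{1+33z+9z^2}{1+3z}$ (Lemma~\ref{lem:brjp-upper}); finally condition~\ref{itm:pre-three} (the min-max security of $\xs$ against $\jp$) forces the $\sr$-part to compensate, giving $(\xl^T R)_{\jp}>\frac{1-15z}{1+3z}$ (Lemma~\ref{lem:srjp}). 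So condition~\ref{itm:pre-four} is used to \emph{upper-bound a $\brs$-expectation}, and condition~\ref{itm:pre-three} is the hypothesis that then lifts the $\sr$-expectation --- a two-step transfer, not a direct contradiction. This chain of implications is the part your sketch is missing.
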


The lemma explicitly states the preconditions that need to hold because we will
reuse it in our communication complexity and query complexity results. Observe
that the preconditions are indeed true if the Algorithm reaches
Step~\ref{itm:five}. The first and third conditions hold because, due to
Step~\ref{itm:two}, we know that $\profx^*$ is a min-max strategy that secures
payoff at least $v_r > \frac{2}{3} - z$. The second condition holds because
Step~\ref{itm:three} ensures that the column player's best response payoff is at
least $\frac{2}{3} - z$. The fourth condition holds because Step~\ref{itm:five}
explicitly checks for these pure strategy profiles.

\paragraph{\bf Overview of the proof of Lemma~\ref{lem:bs}.}
We now give an overview of the ideas used in the proof.
The majority of the proof is dedicated to proving four facts, which we outline
below.
First we determine the
structure of the row $\js$. Here we use the fact that in $(\profx^*, \js)$ both
players have expected payoff close to $\frac{2}{3}$, but there does not exist a
row $i \in \supp(\profx^*)$ such that $R_{i\js} \ge \frac{1}{3} + z$ and
$C_{i\js} \ge \frac{1}{3} + z$ (because such a row would constitute a pure
$(\frac{2}{3} - z)$-WSNE.) The only way this is possible is both of the
following facts hold.
\begin{enumerate}
\item Most of the probability assigned to $B$ is placed on rows $i$ with $R_{i\js} \approx 1$ and $C_{i\js} \approx
\frac{1}{3}$.
\item Most of the probability assigned to $S$ is placed on rows $i$ with $R_{i\js} \approx \frac{1}{3}$ and $C_{i\js} \approx 1$.
\end{enumerate}
Moreover, $\profx^*$ must assign roughly half of its probability to rows in $B$
and half of its probability to rows in $S$.

Next, we observe that since Step~\ref{itm:four} failed to produce a
$(\frac{2}{3} - z)$-WSNE, it must be the case that $\js$ is not a $(\frac{2}{3}
- z)$-best-response against $\xu$, and the payoff of $\js$ against $\xu$ is
  approximately $\frac{1}{3}$, it must be the case that the payoff of $\jp$
against $\xu$ is close to $1$. The only way this is possible is if most column
player payoffs for rows in $B$ are close to $1$. However, if this is the case,
then since $\js$ does not contain a pure $(\frac{2}{3} - z)$-WSNE, we have that
most row player payoffs in $B$ must be below $\frac{1}{3} + z$. This gives us
our third fact.
\begin{enumerate}
\setcounter{enumi}{2}
\item Most of the probability assigned to $B$ is placed on rows $i$ with
$R_{i\jp} < \frac{1}{3}+z$ and $C_{i\jp} \approx 1$.
\end{enumerate}

For the fourth fact, we recall that $\profx^*$ is a min-max strategy that
guarantees payoff at least $v_r > \frac{2}{3} - z$, so the payoff of $\xs$
against $\jp$ must be at least $\frac{2}{3} - z$. However, since most rows $i
\in B$ have $R_{i\jp} < \frac{1}{3}+z$, and since $\xs$ places roughly half its
probability on $B$, it must be the case that most row player payoffs in $S$ are
close to $1$. This gives us our final fact.
\begin{enumerate}
\setcounter{enumi}{3}
\item Most of the probability assigned to $S$ is placed on rows $i$ with
$R_{i\jp} \approx 1$.
\end{enumerate}

Our four facts only describe the \emph{expected} payoff of the rows in $B$ and
$S$ for the columns $\js$ and $\jp$. The final step of the proof is to pick out
two particular rows that satisfy the desired properties. For the row
$b$ we use Facts~1 and~3, observing that if most of the probability assigned to
$B$ is placed on rows $i$ with $R_{i\js} \approx 1$, and on rows $i$ with
$C_{i\js} \approx 1$, then it must be the case that both of these conditions can
be simultaneously satisfied by a single row $b$. The existence of $s$ is proved
by the same argument using Facts~2 and~4.

\paragraph{\bf Quality of approximation.}
We now analyse the quality of WSNE that our algorithm produces.
Steps~\ref{itm:two},~\ref{itm:three},~\ref{itm:four},~\ref{itm:five} each return
a strategy profile. Observe that Steps~\ref{itm:two} and~\ref{itm:three} are the same
as the respective steps in the base algorithm, but with the threshold changed
from $\frac{2}{3}$ to $\frac{2}{3} - z$. Hence, we can use the same reasoning as
we gave for the base algorithm to argue that these steps always return
$(\frac{2}{3} - z)$-WSNE. We now consider the other two steps.
\begin{description}
\item[Step~\ref{itm:four}.] By definition all rows $r \in \brs$ satisfy
$R_{i\js} \ge \frac{1}{3} + z$, so since $\supp(\xu) \subseteq \brs$, the pure
strategy regret of the row player can be at most $1 - (\frac{1}{3} + z) =
\frac{2}{3} - z$. For the same reason, since $(\xu^T \cdot C)_{\js} \ge
\frac{1}{3} + z$ holds, the pure strategy regret of the column player can also
be at $\frac{2}{3} - z$. Thus, the profile $(\xu, \js)$ is a $(\frac{2}{3} -
z)$-WSNE.
\item[Step~\ref{itm:five}.]
Since $R_{b\js} > 1- \frac{18z}{1 + 3z}$, the payoff of $b$ when the column
player plays $\ymp$ is at least:
\begin{equation*}
\frac{1 - 24z}{2 - 39z} \cdot \left( 1 - \frac{18z}{1 + 3z} \right) =
\frac{1 - 39z + 360z^2}{2 - 33z - 117 z^2}
\end{equation*}

Similarly, since $R_{s\jp} > 1- \frac{27z}{1 + 3z}$, the payoff of $s$ when the
column player plays $\ymp$ is at least:
\begin{equation*}
\frac{1 - 15z}{2 - 39z} \cdot \left( 1- \frac{27z}{1 + 3z} \right) =
\frac{1 - 39z + 360z^2}{2 - 33z - 117 z^2}
\end{equation*}
In the same way, one can show that the payoffs of $\js$ and $\jp$ are also
$\frac{1 - 39z + 360z^2}{2 - 33z - 117 z^2}$ when the row player plays $\xmp$.
Thus, we have that $(\xmp, \ymp)$ is a $(1 - \frac{1 - 39z + 360z^2}{2 - 33z -
117 z^2})$-WSNE.
\end{description}

\noindent To find the optimal value for $z$, we need to find the largest value
of $z$ for which the following inequality holds.
\begin{equation*}
1 - \frac{1 - 39z + 360z^2}{2 - 33z - 117 z^2} \le \frac{2}{3} - z.
\end{equation*}
Setting the inequality to an equality and rearranging gives the following cubic
polynomial equation.
\begin{equation*}
117 \, z^{3} + 432 \, z^{2} - 30 \, z + \frac{1}{3} = 0.
\end{equation*}
Since the discriminant of this polynomial is positive, this polynomial has three
real roots, which can be found via the trigonometric method. Only one of
these roots lies in the range $0 \le z < \frac{1}{24}$, which is the following:
\begin{multline*}
z = \frac{1}{117} \, \sqrt{3} \Bigg(\sqrt{2434} \sqrt{3} \cos\left(\frac{1}{3} \,
\arctan\left(\frac{39}{240073} \, \sqrt{9749} \sqrt{3}\right)\right)\\
- 3 \,
\sqrt{2434} \sin\left(\frac{1}{3} \, \arctan\left(\frac{39}{240073} \,
\sqrt{9749} \sqrt{3}\right)\right) - 48 \, \sqrt{3}\Bigg).
\end{multline*}
Thus, we get $z \approx 0.013906376$, and so we have found an algorithm that
always produces a $0.6528$-WSNE. So we have the following theorem.

\begin{theorem}
\label{thm:main}
There is a polynomial time algorithm that, given a bimatrix game, finds a
$0.6528$-WSNE.
\end{theorem}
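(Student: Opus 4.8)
The plan is to analyze Algorithm~\ref{alg:ws} for a suitably chosen constant $z \in [0, \frac{1}{24})$ and establish three things: (i) it always terminates in polynomial time; (ii) every profile it can output is a $(\frac{2}{3} - z)$-WSNE, except for the matching-pennies profile $(\xmp, \ymp)$, which is a $(\frac{1}{2} + f(z))$-WSNE for an explicit increasing function $f$; and (iii) choosing $z$ to balance $\frac{2}{3} - z$ against $\frac{1}{2} + f(z)$ yields the bound $0.6528$. For polynomial time, I would note that Step~\ref{itm:one} solves two zero-sum games, each a linear program solvable in polynomial time, and that all remaining steps (computing best responses, forming $\xu$, scanning the $O(n^2)$ payoffs for rows $b$ and $s$, forming $\xmp$ and $\ymp$) are elementary arithmetic, so the whole procedure is polynomial.

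For correctness I would treat the five possible outputs in turn. Steps~\ref{itm:two} and~\ref{itm:three} are verbatim the corresponding steps of Algorithm~\ref{alg:base} with the threshold lowered from $\frac{2}{3}$ to $\frac{2}{3} - z$, so the base-algorithm argument carries over unchanged to show both players have pure-strategy regret at most $\frac{2}{3} - z$. For Step~\ref{itm:four}, every row of $\brs \supseteq \supp(\xu)$ has row payoff at least $\frac{1}{3} + z$ against the pure column $\js$ by definition of $\brs$, and the step only returns $(\xu, \js)$ when $(\xu^T C)_{\js} \ge \frac{1}{3} + z$; since $\js$ is a single pure strategy, both players then receive payoff at least $\frac{1}{3} + z$ everywhere on their supports, hence regret at most $\frac{2}{3} - z$.

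For Step~\ref{itm:five} I first need the existence of the rows $b \in \brs$ and $s \in \sr$ with the stated near-$1$ payoff guarantees; this is exactly Lemma~\ref{lem:bs}, and I would check that its four preconditions hold whenever the algorithm reaches Step~\ref{itm:five}: conditions~\ref{itm:pre-one} and~\ref{itm:pre-three} follow from $\xs$ being a min-max strategy securing $v_r > \frac{2}{3} - z$ (we passed Step~\ref{itm:two}), condition~\ref{itm:pre-two} from Step~\ref{itm:three} having failed, and condition~\ref{itm:pre-four} because Step~\ref{itm:five} explicitly checks and returns any such pure profile. Granting Lemma~\ref{lem:bs}, substituting the bounds $R_{b\js} > 1 - \frac{18z}{1+3z}$, $C_{b\jp} > 1 - \frac{18z}{1+3z}$, $C_{s\js} > 1 - \frac{27z}{1+3z}$, $R_{s\jp} > 1 - \frac{27z}{1+3z}$ into the definitions of $\xmp$ and $\ymp$ shows that each of the four strategies $b, s, \js, \jp$ earns at least $\frac{1 - 39z + 360z^2}{2 - 33z - 117z^2}$ against the opponent's mixed strategy; since $\supp(\xmp) = \{b, s\}$ and $\supp(\ymp) = \{\js, \jp\}$, the profile is a $\bigl(1 - \frac{1 - 39z + 360z^2}{2 - 33z - 117z^2}\bigr)$-WSNE. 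Finally I would optimize: the worst-case output is a $\max\!\bigl(\frac{2}{3} - z,\ 1 - \frac{1 - 39z + 360z^2}{2 - 33z - 117z^2}\bigr)$-WSNE; the first term decreases and the second increases on $[0, \frac{1}{24})$, so the optimum is where they are equal, i.e.\ at the unique root in that interval of $117 z^3 + 432 z^2 - 30 z + \frac{1}{3} = 0$, giving $z \approx 0.0139$ and approximation $\approx 0.6528$.

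The main obstacle is Lemma~\ref{lem:bs}: extracting single rows $b$ and $s$ whose payoffs in two different columns are simultaneously near $1$. This needs the delicate "four facts" analysis sketched before the lemma — showing that $\xs$ splits its mass roughly evenly between $\brs$ and $\sr$; that the $\brs$-rows look like $(\approx 1, \approx \frac{1}{3})$ in column $\js$ and $(\le \frac{1}{3} + z, \approx 1)$ in column $\jp$; the mirror statement for $\sr$-rows; and then an averaging argument (in the spirit of the union-bound argument in Lemma~\ref{lem:basepure}) to conclude that a single row realizes both near-$1$ payoffs. Everything else is bookkeeping plus a one-variable optimization, so I expect the bulk of the work, and the only genuinely subtle part, to be that lemma, which is why the excerpt defers its proof to the appendix.
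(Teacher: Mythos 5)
Your proposal is correct and follows essentially the same approach as the paper: establishing polynomial running time via LP solvability, arguing each of the five possible outputs of Algorithm~\ref{alg:ws} gives the required WSNE quality (with Steps~\ref{itm:two}--\ref{itm:three} inheriting the base-algorithm analysis, Step~\ref{itm:four} by definition of $\brs$ and the returned condition, and Step~\ref{itm:five} via Lemma~\ref{lem:bs} and direct computation of the matching-pennies payoffs), and then optimizing $z$ by solving $117z^3 + 432z^2 - 30z + \frac{1}{3} = 0$ on $[0, \frac{1}{24})$. Like the paper, you correctly identify Lemma~\ref{lem:bs} as the crux and defer its proof, while accurately sketching the ``four facts'' structure and the averaging/union-bound extraction of the rows $b$ and $s$ that the appendix carries out.
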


\paragraph{\bf Communication complexity.}

We claim that our algorithm can be adapted for the limited communication
setting. We make the following modification to our algorithm. After computing
$\xs, \ys, \hat{\profx},$ and $\hat{\profy}$, we then use Lemma~\ref{lem:GP14}
to construct and communicate the sampled strategies $\xs_s, \ys_s,
\hat{\profx}_s,$ and $\hat{\profy}_s$. These strategies are communicated between
the two players using $4 \cdot (\log n)^2$ bits of communication, and the
players also exchange $v_r = (\xs_s)^T \cdot R\ys_s$ and $v_c = \hat{\profx}_s^TC\hat{\profy}_s$ using $\log n$ rounds of communication.
\todo[inline]{Can $v_r$ actually be represented using $\log n$ bits?}
The algorithm then continues as before, except the sampled strategies are used
in place of their non-sampled counterparts. Finally, in Steps~\ref{itm:two}
and~\ref{itm:three}, we test against the threshold $\frac{2}{3} - z + \epsilon$
instead of $\frac{2}{3} - z$.

Observe that, when sampled strategies are used, all steps of the algorithm can
be carried out in at most $(\log n)^2$ communication. In particular, to
implement Step~\ref{itm:four}, the column player can communicate $\js$ to the
row player, and then the row player can communicate $R_{i\js}$ for all rows $i
\in \supp(\xs_s)$ using $(\log n)^2$ bits of communication, which allows the
column player to determine $\jp$. Once $\jp$ has been determined, there are only
$2 \cdot \log n$ payoffs in each matrix that are relevant to the algorithm (the
payoffs in rows $i \in \supp(\xs_s)$ in columns $\js$ and $\jp$,) and so the two
players can communicate all of these payoffs to each other, and then no further
communication is necessary.

Now, we must argue that this modified algorithm is correct. Firstly, we argue
that if the modified algorithm reaches Step~\ref{itm:five}, then the rows $b$
and $s$ exist. To do this, we observe that the required preconditions of
Lemma~\ref{lem:bs} are satisfied by $\profx^*_s$, $\js$, and $\jp$.
Condition~\ref{itm:pre-two} holds because the modified Step~\ref{itm:three}
ensures that the column player's best response payoff is at least $\frac{2}{3} -
z + \epsilon > \frac{2}{3} - z$, while Condition~\ref{itm:pre-four} is ensured
by the explicit check in Step~\ref{itm:five}. For Conditions~\ref{itm:pre-one}
and~\ref{itm:pre-three}, we use the fact that $(\profx^*, \profy^*)$ is an
$\epsilon$-Nash equilibrium of the zero-sum game $(R, -R)$. The following lemma
shows that any approximate Nash equilibrium of a zero-sum game behaves like an
approximate min-max strategy.

\begin{lemma}
\label{lem:apx-min-max}
If $(\profx, \profy)$ is an $\epsilon$-NE of a zero-sum game $(M, -M)$, then
for every strategy $\profy'$ we have:
\begin{equation*}
\profx^T \cdot M \cdot \profy' \ge \profx^T \cdot M \cdot \profy - \epsilon.
\end{equation*}
\end{lemma}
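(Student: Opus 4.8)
The plan is to unfold the definition of an $\epsilon$-NE in the zero-sum game $(M,-M)$ and track what the two regret conditions say. Write $a := \profx^T M \profy$ for the equilibrium payoff of the row player (so the column player's payoff is $-a$). The row player's regret condition says that no row deviation gains more than $\epsilon$, i.e.\ $\max_i (M\profy)_i \le a + \epsilon$; the column player's regret condition says that no column deviation gains more than $\epsilon$ against $-M$, i.e.\ $\max_{\profy'} \big({-}\profx^T M \profy'\big) \le -a + \epsilon$, which rearranges to $\profx^T M \profy' \ge a - \epsilon$ for every column strategy $\profy'$. Since $a = \profx^T M \profy$, this last inequality is exactly the claimed bound $\profx^T M \profy' \ge \profx^T M \profy - \epsilon$.

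So the whole argument is just: take the column player's half of the $\epsilon$-NE condition, expand it, and flip signs. First I would recall that in the game $(M,-M)$ the column player's payoff under $(\profx,\profy')$ is $-\profx^T M \profy'$, and under the equilibrium profile it is $-\profx^T M \profy$. Then the $\epsilon$-NE guarantee for the column player (she cannot improve by more than $\epsilon$ by switching to any $\profy'$) reads
\begin{equation*}
-\profx^T M \profy' \le -\profx^T M \profy + \epsilon,
\end{equation*}
and multiplying through by $-1$ gives the statement. It is worth noting that only the column player's side of the equilibrium condition is used; the row player's side is irrelevant here.

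There is essentially no obstacle — the only thing to be careful about is the sign bookkeeping, namely that the column player in $(M,-M)$ is \emph{minimizing} $\profx^T M \profy'$, so that "the column player cannot gain more than $\epsilon$" becomes a \emph{lower} bound on $\profx^T M \profy'$ rather than an upper bound. Once that is pinned down, the lemma follows in one line, and it makes precise the intuition that an approximate equilibrium strategy $\profx$ of a zero-sum game is an approximate min-max (security) strategy: it guarantees the row player at least $\profx^T M \profy - \epsilon$ against every column response.
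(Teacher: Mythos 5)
Your proof is correct and takes essentially the same approach as the paper: both arguments extract exactly the column player's half of the $\epsilon$-NE condition and flip the sign using the zero-sum structure. The only cosmetic difference is that the paper phrases it as a proof by contradiction while you argue directly; the content is identical.
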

\begin{proof}
Let $v = \profx^T \cdot M \cdot \profy$ be the payoff to the row player under
$(\profx, \profy)$. Suppose, for the sake of contradiction, that there exists a
column player strategy $\profy'$ such that:
\begin{equation*}
\profx^T \cdot M \cdot \profy' < v - \epsilon.
\end{equation*}
Since the game is zero-sum, this implies that the column player's payoff under
$(\profx, \profy')$ is strictly larger than $-v + \epsilon$, which then directly
implies that
the best response payoff for the column player against $\profx$ is strictly
larger than $-v + \epsilon$. However, since the column player's expected payoff
under $(\profx, \profy)$ is $-v$, this then implies that
$(\profx, \profy)$ is not an $\epsilon$-NE, which provides our contradiction.
\qed
\end{proof}

Since Step~\ref{itm:two} implies that the row player's payoff in $(\profx^*,
\profy^*)$ is at least $\frac{2}{3} - z + \epsilon$, Lemma~\ref{lem:apx-min-max}
implies that $\profx^*$ secures a payoff of $\frac{2}{3} - z$ no matter what
strategy the column player plays, which then implies that
Conditions~\ref{itm:pre-one} and~\ref{itm:pre-three} of Lemma~\ref{lem:bs} hold.

Finally, we argue that the algorithm finds a
$(0.6528 + \epsilon)$. The modified Steps~\ref{itm:two} and~\ref{itm:three} now
return a $(\frac{2}{3} - z + \epsilon)$-WSNE, whereas the approximation
guarantees of the other steps are unchanged. Thus, we can reuse our original
analysis to obtain the following theorem.

\begin{theorem}
\label{thm:commbs}
For every $\epsilon > 0$, there is a randomized expected-polynomial-time
algorithm that uses $O\left(\frac{\log^2 n}{\epsilon^2}\right)$ communication
and finds a $(0.6528 + \epsilon)$-WSNE.
\end{theorem}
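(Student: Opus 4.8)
The proof of Theorem~\ref{thm:commbs} proceeds by taking the polynomial-time algorithm behind Theorem~\ref{thm:main} (Algorithm~\ref{alg:ws}) and implementing it in the limited-communication model, using the sampling primitive from Lemma~\ref{lem:GP14} to replace every full strategy vector by a logarithmically-supported approximation. The plan is as follows. First, the row player solves the zero-sum game $(R,-R)$ locally (no communication needed, since the row player knows $R$), obtaining $(\profx^*,\profy^*)$, and symmetrically the column player solves $(-C,C)$ locally to obtain $(\hat\profx,\hat\profy)$. Each player then invokes Lemma~\ref{lem:GP14} with parameter $\epsilon$ to transmit sampled strategies $\xs_s,\ys_s,\hat\profx_s,\hat\profy_s$ to the opponent; this costs $O(\log^2 n/\epsilon^2)$ bits in total, and after this exchange both players hold all four sampled strategies. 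The players also exchange the values $v_r$ and $v_c$ (using $O(\log n)$ bits), and thereafter the entire remaining logic of Algorithm~\ref{alg:ws} — comparing $v_r$ against the threshold, checking whether $C_j^T\profx^*_s \le 2/3 - z + \epsilon$ for all $j$, forming $\xu$, identifying $\js$ and $\jp$, checking for pure WSNE, and locating the rows $b,s$ — is carried out on the sampled strategies. As noted in the excerpt, each of these steps needs only $O(\log^2 n)$ further bits: $\js$ is one index, the relevant payoff column entries over $\supp(\xs_s)$ are $O(\log n)$ numbers, $\jp$ is one index, and finally the $O(\log n)$ relevant payoffs in the two columns $\js,\jp$ restricted to $\supp(\xs_s)$ suffice to finish.

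The correctness argument has two parts. First, one must re-establish the preconditions of Lemma~\ref{lem:bs} for the sampled objects $\profx^*_s,\js,\jp$ so that the rows $b$ and $s$ are guaranteed to exist. Conditions~\ref{itm:pre-two} and~\ref{itm:pre-four} are immediate: the modified Step~\ref{itm:three} tests against $2/3 - z + \epsilon$, and Step~\ref{itm:five} explicitly checks for pure $(\tfrac23 - z)$-WSNE. For Conditions~\ref{itm:pre-one} and~\ref{itm:pre-three}, the subtlety is that $\profx^*_s$ is only an \emph{approximate} min-max strategy of $(R,-R)$ — more precisely, $(\profx^*_s,\ys_s)$ is an $O(\epsilon)$-NE of the zero-sum game by Lemma~\ref{lem:GP14}. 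Here one invokes Lemma~\ref{lem:apx-min-max}: since Step~\ref{itm:two} did not fire, the row player's payoff under $(\profx^*_s,\ys_s)$ is at least $2/3 - z + \epsilon$ (after possibly absorbing sampling slack into the $\epsilon$ budget), and Lemma~\ref{lem:apx-min-max} then yields $\profx^*_s{}^T R\,\profy' \ge 2/3 - z$ for every column strategy $\profy'$, in particular for the pure columns $\js$ and $\jp$. This gives Conditions~\ref{itm:pre-one} and~\ref{itm:pre-three}.

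The second part is to verify that the approximation guarantee degrades only by an additive $\epsilon$. Steps~\ref{itm:two} and~\ref{itm:three} now return a $(\tfrac23 - z + \epsilon)$-WSNE by the base-algorithm reasoning applied with the shifted threshold. Steps~\ref{itm:four} and~\ref{itm:five} are governed by the same algebraic bounds as in the proof of Theorem~\ref{thm:main} — the construction of $\xu$ and of $(\xmp,\ymp)$, and the cubic optimisation in $z$, are unchanged — so they return a $0.6528$-WSNE (or, accounting for the fact that the guaranteed payoffs in Lemma~\ref{lem:bs} were themselves derived from payoff quantities known only up to $\epsilon$, a $(0.6528 + O(\epsilon))$-WSNE). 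Rescaling $\epsilon$ by a constant factor throughout yields the clean statement: a $(0.6528 + \epsilon)$-WSNE using $O(\log^2 n/\epsilon^2)$ communication, with expected polynomial running time inherited from the expected-polynomial-time sampling of Lemma~\ref{lem:GP14}.

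The main obstacle is bookkeeping the $\epsilon$-slack consistently: the sampled strategies introduce error in three conceptually distinct places — in the value comparisons of Steps~\ref{itm:two}--\ref{itm:three}, in the min-max guarantee needed for Lemma~\ref{lem:bs} (handled via Lemma~\ref{lem:apx-min-max}), and in the payoff estimates used to certify that $b$ and $s$ actually attain the claimed near-$1$ entries — and one must check that a single linear-in-$\epsilon$ loss absorbs all of them simultaneously, rather than compounding. Once it is observed that every use of a sampled strategy perturbs the relevant payoff by at most $\epsilon$ (uniformly over pure responses, by Lemma~\ref{lem:GP14}), and that the thresholds in Steps~\ref{itm:two}--\ref{itm:three} were deliberately shifted by $+\epsilon$ to create the needed cushion, the rest follows by reusing the analysis already carried out for Theorem~\ref{thm:main} verbatim.
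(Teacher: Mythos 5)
Your proposal follows essentially the same route as the paper: solve the two zero‑sum LPs locally, transmit sampled strategies via Lemma~\ref{lem:GP14}, shift the thresholds in Steps~\ref{itm:two}–\ref{itm:three} by $+\epsilon$, re-establish the preconditions of Lemma~\ref{lem:bs} (Conditions~\ref{itm:pre-one},\ref{itm:pre-three} via Lemma~\ref{lem:apx-min-max} applied to the sampled min-max strategy, Conditions~\ref{itm:pre-two},\ref{itm:pre-four} from the shifted check and explicit test), and then reuse the analysis of Theorem~\ref{thm:main}. The only minor divergence is that you allow for an $O(\epsilon)$ degradation in Steps~\ref{itm:four} and~\ref{itm:five} and then rescale, whereas the paper observes that once $\js$ and $\jp$ are fixed the relevant payoffs are communicated exactly, so those steps incur no additional loss — a slightly cleaner bookkeeping, but your rescaling argument is also sound.
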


\paragraph{\bf Query complexity.}
We now show that Algorithm~\ref{alg:ws} can be implemented in a payoff-query
efficient manner. Let $\epsilon > 0$ be a positive constant. We now outline the
changes needed in the algorithm.
\begin{itemize}
\item In Step~\ref{itm:one} we use the algorithm of Lemma~\ref{lem:GRFS} to find
$\frac{\epsilon}{2}$-NEs of $(R, -R)$, and $(C, -C)$. We denote the mixed
strategies found as $(\profx^*_a, \profy^*_a)$ and $(\hat{\profx}_a,
\hat{\profy}_a)$, respectively, and we use these strategies in place of their
original counterparts throughout the rest of the algorithm. We also compute
$\frac{\epsilon}{2}$-approximate payoff vectors for each of these strategies,
and use them whenever we need to know the payoff of a particular strategy under
one of these strategies. In particular, we set $v_r$ to be the payoff of
$\profx^*_a$ according to the approximate payoff vector of $\profy^*_a$, and we
set $v_c$ to be the payoff of $\hat{\profy}_a$ according to the approximate
payoff vector for $\hat{\profx}_a$.

\item In Steps~\ref{itm:two} and~\ref{itm:three} we test against the threshold
of $\frac{2}{3} - z + \epsilon$ rather than $\frac{2}{3} - z$.
\item In Step~\ref{itm:four} we select $j^*$ to be the column that is maximal in
the approximate payoff vector against $\profx^*_a$. We then spend $n$ payoff
queries to query every row in
column $j^*$, which allow us to proceed with the rest of this step as before.
\item In Step~\ref{itm:five} we use the algorithm of Lemma~\ref{lem:GRFS} to
find an approximate payoff vector $v$ for the column player against $\xu$. We
then select $\jp$ to be a column that maximizes $v$, and then spend $n$ payoff
queries to query every row in $\js$, which allows us to proceed with the rest of
this step as before.
\end{itemize}

Observe that the query complexity of the algorithm is $O(\frac{n \cdot \log
n}{\epsilon^2})$, where the dominating term arises due to the use of the
algorithm from Lemma~\ref{lem:GRFS} to approximate solutions to the zero-sum
games.

We now argue that this modified algorithm produces a $(0.6528 + \epsilon)$-WSNE.
Firstly, we need to reestablish the existence of the rows $b$ and $s$ used in
Step~\ref{itm:five}. To do this, we observe that the preconditions of
Lemma~\ref{lem:bs} hold for $\profx^*_a$. We start with
Conditions~\ref{itm:pre-one} and~\ref{itm:pre-three}. Note that the payoff for
the row player under $(\profx^*_a, \profy^*_a)$ is at least $v_r
- \frac{\epsilon}{2}$ (since $v_r$ was estimated with approximate payoff
  vectors,) and Step~\ref{itm:two} ensures that $v_r > \frac{2}{3} - z +
\epsilon$. Hence, we can apply Lemma~\ref{lem:apx-min-max} to argue that
$\profx^*_a$ secures payoff at least $\frac{2}{3} - z$ against every strategy of
the column player, which proves that Conditions~\ref{itm:pre-one}
and~\ref{itm:pre-three} hold. Condition~\ref{itm:pre-two} holds because the
check in Step~\ref{itm:three}, ensures that the approximate payoff of $j^*$
against $\profx^*$ is at least $\frac{2}{3} - z + \epsilon$, and therefore the
actual payoff of of $j^*$ against $\profx^*$ is at least $\frac{2}{3} - z +
\frac{\epsilon}{2}$. Finally, Condition~\ref{itm:pre-four} holds because pure
strategy profiles of this form are explicitly checked for in
Step~\ref{itm:five}.

Steps~\ref{itm:two} and~\ref{itm:three} in the modified algorithm return a
$(\frac{2}{3} - z + \epsilon)$-WSNE, while the other steps provided the same
approximation guarantee as the original algorithm. So, we can reuse the analysis
 for the original algorithm to prove the following theorem.

\begin{theorem}
\label{thm:queries}
There is a randomized algorithm that, with high probability, finds a $(0.6528
+ \epsilon)$-WSNE using $O(\frac{n \cdot \log n}{\epsilon^2})$ payoff queries.
\end{theorem}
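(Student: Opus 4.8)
The plan is to follow the same template used for the communication-complexity result (Theorem~\ref{thm:commbs}), replacing the sampled strategies by the approximate strategies and approximate payoff vectors obtained from Lemma~\ref{lem:GRFS}, and then checking that the two invariants that drive the original analysis still hold: (i) every step other than the matching-pennies step returns a $(\frac{2}{3}-z+\epsilon)$-WSNE, and (ii) when the algorithm reaches Step~\ref{itm:five}, the four preconditions of Lemma~\ref{lem:bs} are met by $\profx^*_a$, so the rows $b$ and $s$ actually exist.

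First I would account for the query budget. Running the algorithm of Lemma~\ref{lem:GRFS} on $(R,-R)$ and on $(C,-C)$ with accuracy $\frac{\epsilon}{2}$ costs $O(\frac{n\log n}{\epsilon^2})$ queries each and simultaneously delivers $\frac{\epsilon}{2}$-approximate payoff vectors; one more invocation in Step~\ref{itm:five} to get the approximate payoff vector of the column player against $\xu$ costs the same, and the two ``query an entire column'' operations in Steps~\ref{itm:four} and~\ref{itm:five} cost only $n$ queries apiece. Hence the total is $O(\frac{n\log n}{\epsilon^2})$, dominated by the zero-sum solves, and the whole thing succeeds with the high probability guaranteed by Lemma~\ref{lem:GRFS}. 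Everything after the queries is a polynomial-time computation on the revealed payoffs.

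Next I would re-establish the preconditions of Lemma~\ref{lem:bs} for $\profx^*_a$, exactly as in the communication proof. Conditions~\ref{itm:pre-one} and~\ref{itm:pre-three} follow because the row player's true payoff under $(\profx^*_a,\profy^*_a)$ is at least $v_r-\frac{\epsilon}{2}>\frac{2}{3}-z+\frac{\epsilon}{2}$ (the test in Step~\ref{itm:two} having failed against threshold $\frac{2}{3}-z+\epsilon$, and $v_r$ being an $\frac{\epsilon}{2}$-estimate), so by Lemma~\ref{lem:apx-min-max} the strategy $\profx^*_a$ secures at least $\frac{2}{3}-z$ against \emph{every} column, in particular against $\js$ and $\jp$. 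Condition~\ref{itm:pre-two} holds because Step~\ref{itm:three} failing against threshold $\frac{2}{3}-z+\epsilon$ means the \emph{approximate} payoff of $j^*$ against $\profx^*_a$ exceeds $\frac{2}{3}-z+\epsilon$, hence the true payoff exceeds $\frac{2}{3}-z+\frac{\epsilon}{2}\ge\frac{2}{3}-z$. Condition~\ref{itm:pre-four} is immediate from the explicit pure-WSNE check performed at the start of Step~\ref{itm:five}.

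Finally I would copy over the approximation-quality bookkeeping: Steps~\ref{itm:two} and~\ref{itm:three} now return $(\frac{2}{3}-z+\epsilon)$-WSNE by the same argument as for the base algorithm with the shifted threshold; Step~\ref{itm:four} returns a $(\frac{2}{3}-z)$-WSNE (the defining inequalities $R_{i\js}\ge\frac{1}{3}+z$ on $\brs$ and $(\xu^T C)_{\js}\ge\frac{1}{3}+z$ are unaffected by how $\xu$ was obtained); and Step~\ref{itm:five} returns a $(1-\frac{1-39z+360z^2}{2-33z-117z^2})$-WSNE as before, since the magnitudes of the four relevant payoffs of $b,s,\js,\jp$ under $\xmp,\ymp$ are guaranteed by Lemma~\ref{lem:bs}. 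Optimizing $z$ as in the proof of Theorem~\ref{thm:main} gives a $(0.6528+\epsilon)$-WSNE, which together with the query bound above proves the theorem. The only mildly delicate point — the ``main obstacle'' — is tracking the $\frac{\epsilon}{2}$ slack correctly through the approximate payoff vectors so that the inputs to Lemma~\ref{lem:bs} are true with respect to the \emph{exact} payoffs (the lemma's conclusions are statements about the real matrices $R,C$); but this is exactly the same $\epsilon$-vs-$\frac{\epsilon}{2}$ juggling already carried out in the communication-complexity argument, and no new idea is needed.
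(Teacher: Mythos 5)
Your proposal matches the paper's own proof essentially step-for-step: running Lemma~\ref{lem:GRFS} on the two zero-sum games with accuracy $\epsilon/2$ to get $\profx^*_a,\profy^*_a,\hat{\profx}_a,\hat{\profy}_a$ plus approximate payoff vectors, shifting the thresholds in Steps~\ref{itm:two} and~\ref{itm:three} to $\frac{2}{3}-z+\epsilon$, re-verifying the four preconditions of Lemma~\ref{lem:bs} for $\profx^*_a$ via Lemma~\ref{lem:apx-min-max} with exactly the same $\epsilon/2$-slack bookkeeping, and tallying the query budget as $O(\frac{n\log n}{\epsilon^2})$ dominated by the zero-sum solves and the extra payoff-vector estimate in Step~\ref{itm:five}. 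The only deviation from the paper's text is that you say the extra $n$ queries in Step~\ref{itm:five} target column $\jp$ whereas the paper writes $\js$; since column $\js$ is already fully queried in Step~\ref{itm:four}, your reading is the correct one.
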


\section{A communication efficient algorithm for finding a $\left(\frac{3 - \sqrt{5}}{2} + \epsilon \right)$-NE}

\textbf{The algorithm.} We will study the following algorithm.

\alg{alg:epsne}
\begin{tcolorbox}[title=Algorithm~\ref{alg:epsne}]
\begin{enumerate}
\label{itm:eps-one}
\item Solve the zero-sum games $(R, -R)$ and $(-C, C)$.
\begin{itemize}
\item Let $(\profx^*, \profy^*)$ be
a NE of $(R, -R)$, and let $(\hat{\profx}, \hat{\profy})$ be a NE of $(C, -C)$.
\item
Let $v_r$ be the value secured by $\profx^*$ in $(R, -R)$, and let
$v_c$ be the value secured by $\hat{\profy}$ in $(-C, C)$. Without loss of
generality assume that $v_c \le v_r$.
\item If $v_r \le \frac{3 - \sqrt{5}}{2}$, return $(\hat{\profx}, \profy^*)$.
\end{itemize}
\item Otherwise:
\label{itm:eps-two}
\begin{itemize}
\item Let $j$ be a best response for the column player against $\profx^*$.
\item Let $r$ be a best response for the row player against $j$.
\item Define the strategy profile $\profx' = \frac{1}{2 - v_r} \cdot \profx^* +
\frac{1 - v_r}{2 - v_r} \cdot r$.
\item Return $(\profx', j)$.
\end{itemize}
\end{enumerate}
\end{tcolorbox}

We show that this algorithm always produces a $\frac{3 - \sqrt{5}}{2}$-NE. We
start by considering the strategy profile returned by Step~\ref{itm:eps-one}.
The maximum payoff that the row player can achieve against $\profy^*$ is $v_r$,
so the row player's regret can be at most $v_r$. Similarly, the maximum payoff
that the column layer can achieve against $\hat{\profx}$ is $v_c \le v_r$, so
the column player's regret can be at most $v_r$. Step~\ref{itm:eps-one} only
returns a strategy profile in the case where $v_r \le \frac{3 - \sqrt{5}}{2}$,
so this step always produces a $\frac{3 - \sqrt{5}}{2}$-NE.

To analyse the quality of approximate equilibrium found by
Step~\ref{itm:eps-two}, we use the following Lemma.

\begin{lemma}
\label{lem:epsregret}
The strategy profile $(\profx', j)$ is a $\frac{1 - v_r}{2 - v_r}$-NE.
\end{lemma}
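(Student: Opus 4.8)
The plan is to bound separately the regret of each player under the profile $(\profx', j)$, where $\profx' = \frac{1}{2 - v_r} \profx^* + \frac{1 - v_r}{2 - v_r} r$, recalling that $j$ is a column-player best response to $\profx^*$, $r$ is a row-player best response to $j$, and (from Step~\ref{itm:eps-one} failing) $v_r > \frac{3-\sqrt 5}{2}$, with $\profx^*$ a min-max strategy of $(R,-R)$ securing row-payoff $v_r$ against every column.

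First I would handle the row player. Since $r$ is a best response to the pure column $j$, the row player's best-response payoff against $j$ is $(R_{\cdot j})_r = \max_i R_{ij}$, call it $\rho$. The row player's actual payoff under $(\profx', j)$ is $\frac{1}{2-v_r}(R^T_{\cdot j}\profx^*) + \frac{1-v_r}{2-v_r}\rho$. Since $R^T_{\cdot j}\profx^* \ge 0$ (payoffs are nonnegative) — actually I should be more careful and note $R^T_{\cdot j}\profx^*$ is just some value in $[0,1]$; the key point is that the row player already plays $r$ with probability $\frac{1-v_r}{2-v_r}$, so its payoff is at least $\frac{1-v_r}{2-v_r}\rho$, hence the regret is at most $\rho - \frac{1-v_r}{2-v_r}\rho = \frac{1}{2-v_r}\rho \le \frac{1}{2-v_r}$. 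That bound $\frac{1}{2-v_r}$ is not obviously $\le \frac{1-v_r}{2-v_r}$, so this crude approach is too weak — I instead need to use that $R^T_{\cdot j}\profx^* \ge v_r$ is false in general (it's $\le \rho$ but could be small). The right move: the row player's payoff under $(\profx',j)$ is at least $\frac{1-v_r}{2-v_r}\cdot \rho \ge \frac{1-v_r}{2-v_r}$ only if $\rho \ge 1$... Let me reconsider — actually the row regret is $\rho - \text{payoff} = \frac{1}{2-v_r}(\rho - R^T_{\cdot j}\profx^*) \le \frac{1}{2-v_r}\cdot 1$; to get $\frac{1-v_r}{2-v_r}$ we'd need $\rho - R^T_{\cdot j}\profx^* \le 1 - v_r$, which does NOT hold in general. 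So the row-player bound must come from a different, cleverer argument, possibly that $\rho \le 1$ and $R^T_{\cdot j}\profx^* \ge$ something; here I suspect the intended observation is that $\rho - R^T_{\cdot j}\profx^*\le 1 - R^T_{\cdot j}\profx^*$, combined with... Actually I believe the clean fact is: row regret $= \frac{1}{2-v_r}(\rho - R^T_{\cdot j}\profx^*)$ and since all payoffs lie in $[0,1]$ we have $\rho - R^T_{\cdot j}\profx^* \le 1$; but we want $\le 1 - v_r$. The resolution: this isn't true, so the row player is NOT the binding player and I've mis-set-up. Let me re-examine: it should be that the row player's regret is actually $0$ or small because $\profx'$ mixes in a best response — no. I'll present the setup honestly and flag this as the obstacle.

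For the column player: $j$ is a best response to $\profx^*$, so under $(\profx', j)$ the column player gets $(C^T\profx')_j = \frac{1}{2-v_r}(C^T\profx^*)_j + \frac{1-v_r}{2-v_r}C_{rj}$. The column player's best-response payoff against $\profx'$ is $\max_{j'}\frac{1}{2-v_r}(C^T\profx^*)_{j'} + \frac{1-v_r}{2-v_r}C_{rj'}$. Since $(C^T\profx^*)_j \ge (C^T\profx^*)_{j'}$ for all $j'$, the first terms favor $j$; the deviation gain comes only from the second term, bounded by $\frac{1-v_r}{2-v_r}(C_{rj'} - C_{rj}) \le \frac{1-v_r}{2-v_r}$. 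So the column player's regret is at most $\frac{1-v_r}{2-v_r}$, which is exactly the claimed bound — this half is clean.

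\textbf{The main obstacle} is getting the row player's regret down to $\frac{1-v_r}{2-v_r}$. The key insight I expect to need is that $r$ is a best response to $j$ and $j$ is chosen \emph{after} $\profx^*$: one should show $R^T_{\cdot j}\profx^* \ge v_r$ is replaced by the bound that the row best-response value $\rho$ against $j$ satisfies $\rho \le 1$ while simultaneously the mixed payoff is at least $v_r \cdot \frac{1}{2-v_r} + \frac{1-v_r}{2-v_r}\cdot\rho$ — using $R^T_{\cdot j}\profx^* \ge v_r$ because $\profx^*$ is a min-max strategy of $(R,-R)$ that \emph{guarantees} row-payoff $v_r$ against \emph{every} column, in particular against the pure column $j$! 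That is the piece I initially missed. Then row payoff $\ge \frac{v_r}{2-v_r} + \frac{(1-v_r)\rho}{2-v_r}$, and row regret $\le \rho - \frac{v_r + (1-v_r)\rho}{2-v_r} = \frac{(2-v_r)\rho - v_r - (1-v_r)\rho}{2-v_r} = \frac{\rho - v_r}{2-v_r} \le \frac{1 - v_r}{2-v_r}$, using $\rho \le 1$. So the whole proof hinges on invoking the min-max guarantee of $\profx^*$ against the pure column $j$; I would state that explicitly, then do the two short computations above, and conclude $(\profx', j)$ is a $\frac{1-v_r}{2-v_r}$-NE.
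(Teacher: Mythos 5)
Your final argument is correct and is essentially identical to the paper's proof: the row-player bound rests on the min-max guarantee $((\profx^*)^T R)_j \ge v_r$ together with $R_{rj}\le 1$, giving regret $\frac{1}{2-v_r}(R_{rj}-v_r)\le\frac{1-v_r}{2-v_r}$, and the column-player bound uses that $j$ is a best response to $\profx^*$ so only the $\frac{1-v_r}{2-v_r}$-weighted pure component $r$ can contribute to the regret. Despite the initial detour before you recalled that $\profx^*$ secures $v_r$ against the pure column $j$, the proof you settle on matches the paper's.
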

\begin{proof}
We start by analysing the regret of the row player. By definition, row $r$ is a
best response against column $j$. So, the regret of the row player can be
expressed as:
\begin{align*}
R_{rj} - (\profx' \cdot R)_j &=
R_{rj}-\frac{1}{2-v_R} \cdot ((\profx^*)^T \cdot R)_{j}- \frac{1-v_R}{2-v_R} \cdot R_{rj} \\
&\le \frac{1}{2-v_R} \cdot R_{rj} - \frac{1}{2-v_R} \cdot v_R \\
&\le \frac{1}{2-v_R} \cdot 1 - \frac{1}{2-v_R} \cdot v_r \\
&= \frac{1-v_R}{2-v_R},
\end{align*}
where in the first inequality we use the fact that $\profx^*$ is a min-max
strategy that secures payoff at least $v_r$, and the second inequality uses the
fact that $R_{rj} \le 1$.

We now analyse the regret of the column player. Let $c$ be a best response for
the column player against $\profx'$. The regret of the column player can be
expressed as:
\begin{align*}
& \quad\; ((\profx')^T \cdot C)_{c} - ((\profx')^T \cdot C)_{j} \\
&=\frac{1}{2-v_R} \cdot ((\profx^*)^T \cdot C)_{c} + \frac{1-v_R}{2-v_R} \cdot
C_{rc} - \frac{1}{2-v_R} \cdot ((\profx^*)^T \cdot C)_{x^*j} -
\frac{1-v_R}{2-v_R} \cdot C_{rj} \\
&\le \frac{1-v_R}{2-v_R} \cdot C_{rc} - \frac{1-v_R}{2-v_R} \cdot C_{rj} \\
& \le \frac{1-v_R}{2-v_R}.
\end{align*}
The first inequality holds since $j$ is a best response against $x^*$ , and
therefore $((\profx^*)^T \cdot C)_{c} \le (\profx^*)^T \cdot C)_j$, and the
second inequality holds since $C_{rc} \le 1$ and $C_{rj} \ge 0$. Thus, we have
shown that both players have regret at most $\frac{1 - v_r}{2 - v_r}$ under
$(\profx', j)$, and therefore $(\profx', j)$ is a $\frac{1 - v_r}{2 - v_r}$-NE.
\qed
\end{proof}

Step~\ref{itm:eps-two} is only triggered in the case where $v_r >
\frac{3-\sqrt{5}}{2}$, and we have that $\frac{1 - v_r}{2 - v_r} =
\frac{3-\sqrt{5}}{2}$ when $v_r = \frac{3-\sqrt{5}}{2}$. Since $\frac{1 - v_r}{2
- v_r}$ decreases as $v_r$ increases, we therefore have that
Step~\ref{itm:eps-two} always produces a $\frac{3-\sqrt{5}}{2}$-NE. This
completes the proof of correctness for the algorithm.

\paragraph{\bf Communication complexity.} We now argue that, for every $\epsilon
> 0$ the algorithm can be
used to find a $\left(\frac{3 - \sqrt{5}}{2} + \epsilon\right)$-NE using
$O\left(\frac{\log^2 n}{\epsilon^2}\right)$ rounds of communication.

We begin by considering Step~\ref{itm:eps-one}. Obviously, the zero-sum games
can be solved by the two players independently without any communication. Then,
the players exchange $v_r$ and $v_c$ using $O(\log n)$ rounds of communication.
If both $v_r$ and $v_c$ are smaller than $\frac{3 - \sqrt{5}}{2}$, then the
algorithm from Lemma~\ref{lem:GP14} is applied to communicate $\hat{\profx}_s$
to the row player, and $\profy^*_s$ to the column player. Since the payoffs
under the sampled strategies are within $\epsilon$ of the originals, we have
that $(\hat{\profx}_s, \profy^*_s)$ is a $\left(\frac{3 - \sqrt{5}}{2} +
\epsilon \right)$-NE.

If the algorithm reaches Step~\ref{itm:eps-two}, then the row player uses the
algorithm of Lemma~\ref{lem:GP14} to communicate $\profx^*_s$ to the column
player. The column player then computes a best response $j_s$ against
$\profx^*_s$, and uses $\log n$ communication rounds to transmit it to the row
player. The row player then computes a best response $r_s$ against $j_s$, then
computes: $\profx'_s = \frac{1}{2 - v_r} \cdot \profx^*_s + \frac{1 - v_r}{2 -
v_r} \cdot r$, and the players output $(\profx'_s, j_s)$. To see that this
produces a $\left(\frac{3 - \sqrt{5}}{2} + \epsilon \right)$-NE, observe that
$\profx^*_s$ secures a payoff of at least $v_r - \epsilon$ for the row player,
and repeating the proof of Lemma~\ref{lem:epsregret} with this weaker inequality
gives that this strategy profile is a $\left(\frac{1 - v_r}{2 - v_r} + \epsilon
\right)$-NE.

Therefore, we have shown the following theorem.
\begin{theorem}
\label{thm:epscomm}
For every $\epsilon > 0$, there is a randomized expected-polynomial-time
algorithm that uses $O\left(\frac{\log^2 n}{\epsilon^2}\right)$ communication and finds a
$\left(\frac{3 - \sqrt{5}}{2} + \epsilon \right)$-NE.
\end{theorem}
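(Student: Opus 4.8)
The plan is to verify that Algorithm~\ref{alg:epsne}, already shown to be correct, can be implemented with the required communication budget, and that the loss in approximation quality from using sampled strategies is only an additive $\epsilon$. First I would recall the structure of the algorithm: it solves the two zero-sum games $(R,-R)$ and $(-C,C)$, which requires \emph{no} communication since each player owns exactly one of these games; the row player solves $(R,-R)$ and the column player solves $(-C,C)$. The only data that must cross between the players at this point are the scalar values $v_r$ and $v_c$, which are exchanged in $O(\log n)$ rounds (assuming, as the paper does, constant bit-length payoffs so these values have $O(\log n)$-bit representations).

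Next I would handle the two branches. In the first branch ($v_r \le \frac{3-\sqrt5}{2}$), the algorithm returns $(\hat{\profx},\profy^*)$, so the row player needs $\hat{\profx}$ and the column player needs $\profy^*$. Using Lemma~\ref{lem:GP14}, each player transmits a sampled strategy ($\hat{\profx}_s$ to the row player, $\profy^*_s$ to the column player) using $O(\frac{\log^2 n}{\epsilon^2})$ communication, with the guarantee that every entry of the resulting payoff vector is within $\epsilon$ of the original. Since both players' regrets in the true profile were bounded by $v_r \le \frac{3-\sqrt5}{2}$, and each payoff entry moves by at most $\epsilon$, the sampled profile $(\hat{\profx}_s,\profy^*_s)$ is a $\left(\frac{3-\sqrt5}{2}+\epsilon\right)$-NE. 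In the second branch ($v_r > \frac{3-\sqrt5}{2}$), the row player sends $\profx^*_s$ to the column player (again $O(\frac{\log^2 n}{\epsilon^2})$ communication); the column player computes a best response $j_s$ against $\profx^*_s$ and sends its index back in $\log n$ bits; the row player computes a best response $r_s$ against $j_s$ and forms $\profx'_s = \frac{1}{2-v_r}\profx^*_s + \frac{1-v_r}{2-v_r} r_s$; the output is $(\profx'_s, j_s)$.

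The one substantive point — and the only place where real care is needed — is re-running the proof of Lemma~\ref{lem:epsregret} with $\profx^*_s$ in place of $\profx^*$. The original proof used the fact that $\profx^*$ is a min-max strategy securing payoff at least $v_r$ against every column; after sampling, the guarantee from Lemma~\ref{lem:GP14} is only that $(\profx^*_s)^T R \cdot e_j \ge (\profx^*)^T R \cdot e_j - \epsilon \ge v_r - \epsilon$ for every column $j$. Propagating this weaker inequality through the row player's regret bound gives $\frac{1-v_r}{2-v_r} + \frac{\epsilon}{2-v_r} \le \frac{1-v_r}{2-v_r} + \epsilon$ (since $2 - v_r \ge 1$); the column player's regret bound is unaffected because it only used $\profx^*$ being a best-response target for $j$ and the trivial payoff bounds, none of which degrade. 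Hence $(\profx'_s, j_s)$ is a $\left(\frac{1-v_r}{2-v_r}+\epsilon\right)$-NE, and since $v_r > \frac{3-\sqrt5}{2}$ and $\frac{1-v_r}{2-v_r}$ is decreasing in $v_r$, this is a $\left(\frac{3-\sqrt5}{2}+\epsilon\right)$-NE. Summing the communication across all steps gives $O\!\left(\frac{\log^2 n}{\epsilon^2}\right)$ total, and all intermediate computations are polynomial-time (solving zero-sum games by LP, sampling, computing best responses), so the overall procedure is randomized expected-polynomial-time, completing the proof. I do not anticipate a genuine obstacle here; the main thing to get right is the bookkeeping on the additive $\epsilon$ losses in the second branch.
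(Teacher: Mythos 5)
Your proof is correct and takes essentially the same approach as the paper: solve the two zero-sum games independently with no communication, exchange $v_r, v_c$ in $O(\log n)$ bits, and use Lemma~\ref{lem:GP14} to transmit sampled strategies in each branch, rerunning the proof of Lemma~\ref{lem:epsregret} with the weakened guarantee that $\profx^*_s$ secures $v_r - \epsilon$. Your explicit bookkeeping showing the extra loss is $\frac{\epsilon}{2-v_r} \le \epsilon$ (since $2 - v_r \ge 1$) is a detail the paper leaves implicit, but it is exactly the right calculation.
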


\section{Lower bounds}

Consider the following game.

\begin{center}
\bimatrixgame{3.7mm}{2}{2}{I}{II}%
{{$t$}{$s$}}%
{{$b$}{$r$}}%
{
\payoffpairs{1}{{$0$}{$1$}}{{$1$}{$0.9$}}
\payoffpairs{2}{{$\frac{2}{3}$}{$0.9$}}{{$0$}{$\frac{2}{3}$}}
}
\end{center}

In the game $(R, -R)$, the unique Nash equilibrium is $(b, l)$, which can be
found by iterated elimination of dominated strategies. Similarly, in the game
$(-C, C)$, the unique Nash equilibrium is $(b, r)$, which can again be found by
elimination of dominated strategies. Note, however, that the game itself does
not contain any dominated strategies.
Hence, we have $v_R = v_C = \frac{2}{3}$, so Step~\ref{itm:cfj-two} is triggered,
and the resulting strategy profile is $(b, l)$. Under this strategy profile, the
column player receives payoff $0$, while the best response payoff to the column
player is $\frac{2}{3}$, so this is a $\frac{2}{3}$-WSNE and no better.

This lower bound can be modified to work against our
algorithm for finding a $0.6528$-WSNE by changing both $\frac{2}{3}$ payoffs to
$0.6528$. Then, by the same reasoning given above, Step~\ref{itm:two} is
triggered, and the algorithm returns a $0.6528$-WSNE.

\todo[inline]{This lower bound is not particularly satisfying, because it causes
Step 2 to be triggered, and the rest of the algorithm is ignored. Ideally, for
showing lower bounds, we should recast the algorithm so that all steps are run
in parallel, and the best WSNE found is returned. Work on this is ongoing...}

\bibliographystyle{abbrv}
\bibliography{references}

\newpage
\appendix

\section{A communication efficient algorithm for finding a $0.5$-WSNE in a
win-lose bimatrix game (proof of Theorem~\ref{thm:zo-comm})}
\label{app:zero-one}

We will study the following simple modification of Algorithm~\ref{alg:base}.

\alg{alg:zero-one}
\begin{tcolorbox}[title=Algorithm~\ref{alg:zero-one}]
\begin{enumerate}
\item
\label{itm:zo-one}
Solve the zero-sum games $(R, -R)$ and $(-C, C)$.
\begin{itemize}
\item Let $(\profx^*, \profy^*)$ be
a NE of $(R, -R)$, and let $(\hat{\profx}, \hat{\profy})$ be a NE of $(C, -C)$.
\item
Let $v_r$ be the value secured by $\profx^*$ in $(R, -R)$, and let $v_c$ be the
value secured by $\hat{\profy}$ in $(-C, C)$. Without loss of generality assume
that $v_c \le v_r$.
\end{itemize}
\item
\label{itm:zo-two}
If $v_r \leq 0.5$, then return $(\hat{\profx}, \profy^*)$.
\item
\label{itm:zo-three}
If for all $j \in [n]$ it holds that $C^T_j \cdot \profx^* \leq 0.5$, then
return $(\profx^*, \profy^*)$.
\item
\label{itm:zo-four}
Otherwise:
\begin{itemize}
\item Let $\js$ be a pure best response to $\profx^*$.
\item Find a row $i$ such that $R_{i\js} = 1$ and $C_{ij} = 1$.
\item Return $(i, \js)$.
\end{itemize}
\end{enumerate}
\end{tcolorbox}

We will show that this algorithm always finds a $0.5$-WSNE in a win-lose game.
Firstly, we show that the pure Nash equilibrium found in Step~\ref{itm:zo-four}
always exists. The following lemma is similar to Lemma~\ref{lem:basepure}, but
exploits the fact that the game is win-lose to obtain a stronger conclusion.

\begin{lemma}
\label{lem:zopure}
If Algorithm~\ref{alg:zero-one} is applied to a win-lose game, and it reaches
Step~\ref{itm:zo-four}, then then there exists a row $i \in \supp(\profx^*)$ such that $R_{i\js} =
1$ and $C_{i\js} = 1$.
\end{lemma}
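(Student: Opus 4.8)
The plan is to mirror the proof of Lemma~\ref{lem:basepure}, exploiting the fact that in a win-lose game every payoff becomes a $\{0,1\}$-valued random variable once a row is sampled, so that lower bounds on expectations translate immediately into probability bounds. First I would record the two inequalities guaranteed by the algorithm having passed Steps~\ref{itm:zo-two} and~\ref{itm:zo-three}. Since Step~\ref{itm:zo-two} did not fire, $\profx^*$ is a min-max strategy of $(R,-R)$ securing value $v_r > 1/2$, hence $\sum_{i} \profx^*_i R_{i\js} \ge v_r > 1/2$. Since Step~\ref{itm:zo-three} did not fire, there is a column $j$ with $C_j^T \profx^* > 1/2$, and because $\js$ is a column-player best response against $\profx^*$ we have $\sum_{i} \profx^*_i C_{i\js} = C_{\js}^T \profx^* \ge C_j^T\profx^* > 1/2$.

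Next I would sample a row $i$ according to the distribution $\profx^*$ and show it has the desired properties with positive probability. Because all payoffs lie in $\{0,1\}$, we get $\pr(R_{i\js} = 1) = E[R_{i\js}] = \sum_{i} \profx^*_i R_{i\js} > 1/2$, so $\pr(R_{i\js} = 0) < 1/2$; identically $\pr(C_{i\js} = 0) < 1/2$. A union bound then gives $\pr(R_{i\js} = 0 \text{ or } C_{i\js} = 0) < 1$, so with positive probability the sampled row $i$ satisfies $R_{i\js} = 1$ and $C_{i\js} = 1$. Since $i$ is drawn from $\profx^*$, every row that occurs with positive probability lies in $\supp(\profx^*)$, so such a row exists inside $\supp(\profx^*)$, which is the claimed strengthening.

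There is no serious obstacle here beyond correctly matching each passed test to the expectation bound it supplies; the $\{0,1\}$ structure is exactly what lets us upgrade the ``$>1/3$'' conclusion of Lemma~\ref{lem:basepure} to ``$=1$'' and to place the witnessing row inside $\supp(\profx^*)$ rather than merely in $[n]$. The one mild subtlety worth spelling out is the use of $\js$ being a best response (rather than just the column witnessing the failure of Step~\ref{itm:zo-three}) to obtain the column-payoff bound $C_{\js}^T\profx^* > 1/2$.
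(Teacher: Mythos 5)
Your proposal is correct and follows essentially the same route as the paper: sample a row from $\profx^*$, bound the probability that each of $R_{i\js}$ and $C_{i\js}$ is $0$ by $1/2$ using the two expectation lower bounds inherited from the failure of Steps~\ref{itm:zo-two} and~\ref{itm:zo-three}, and finish with a union bound. The only cosmetic difference is that the paper phrases the probability bound via Markov's inequality applied to $T = 1 - R_{i\js}$, whereas you observe directly that for $\{0,1\}$-valued payoffs $\pr(R_{i\js}=1) = E[R_{i\js}]$; for indicator variables these are the same step, so the arguments coincide.
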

\begin{proof}
Let $i$ be a row sampled from $\profx^*$. We will show that there is a positive
probability that row $i$ satisfies the desired properties.

We begin by showing that the probability that $\pr(R_{i\js} = 0) < 0.5$. Let the
random variable $T = 1 - R_{i\js}$. Since $v_r > \frac{1}{2}$, we have that
$E[T] < 0.5$. Thus, applying Markov's inequality we obtain:
\begin{equation*}
\pr(T \ge 1) \le \frac{E[T]}{1} < 0.5.
\end{equation*}
Since $\pr(R_{i\js} = 0) = \pr(T \ge 1)$ we can therefore conclude that
$\pr(R_{i\js} = 0) < 0.5$. The exact same technique can be used to prove that
$\pr(C_{i\js} = 0) < 0.5$, by using the fact that $C^T_{\js} \cdot \profx^* >
0.5$.

We can now apply the union bound to argue that:
\begin{equation*}
\pr(R_{i\js} = 0 \text{ or } C_{i\js} = 0) < 1.
\end{equation*}
Hence, there is positive probability that row $i$ satisfies $R_{i\js} > 0$ and
$C_{i\js} > 0$, so such a row must exist. The final step is to observe that,
since the game is win-lose, we have that $R_{i\js} > 0$ implies $R_{i\js} = 1$,
and that $C_{i\js} > 0$ implies $C_{i\js} = 1$. \qed
\end{proof}

We now prove that the algorithm always finds a $0.5$-WSNE. The reasoning is very
similar to the analysis of the base algorithm. The strategy profiles returned by
Steps~\ref{itm:zo-two} and~\ref{itm:zo-three} are 0.5-WSNEs by the same
reasoning that was given for the base algorithm. Step~\ref{itm:zo-four} always
returns a pure Nash equilibrium.

\paragraph{\bf Communication complexity.}

We now show that Algorithm~\ref{alg:zero-one} can be implemented in a
communication efficient way.

The zero-sum games in Step~\ref{itm:zo-one} can be solved by the two players
independently without any communication. Then, the players exchange $v_r$ and
$v_s$ using $O(\log n)$ rounds of communication. If both $v_r$ and $v_s$ are
smaller than $0.5$, then the algorithm from Lemma~\ref{lem:GP14} is applied to
communicate $\hat{\profx}_s$ to the row player, and $\profy^*_s$ to the column
player. Since the payoffs under the sampled strategies are within $\epsilon$ of
the originals, we have that all pure strategies have payoff less than or equal
to $0.5 + \epsilon$ under $(\hat{\profx}_s, \profy^*_s)$, so this strategy
profile is a $(0.5 + \epsilon)$-WSNE.

We will assume from now on that $v_r > v_c$. If the algorithm reaches
Step~\ref{itm:zo-three}, then the row player uses the algorithm of
Lemma~\ref{lem:GP14} to communicate $\profx^*_s$ to the column player. The
column player then computes a best response $\js_s$ against $\profx^*_s$, and
checks whether the payoff of $\js_s$ against $\profx^*_s$ is less than or equal
to $0.5 + \epsilon$. If so, then the players output $(\profx^*_s, \js_s)$, which
is a $0.5 + \epsilon$-WSNE

Otherwise, we claim that there is a pure strategy $i \in \supp(\profx^*_s)$ such
that $(i, \js_s)$ is a pure Nash equilibrium. This can be shown by observing
that the expected payoff of $\profx^*_s$ against $\js_s$ is at least $0.5 -
\epsilon$, while the expected payoff of $\js_s$ against $\profx^*_s$ is at least
$0.5 + \epsilon$. Repeating the proof of Lemma~\ref{lem:zopure} using these
inequalities then shows that the pure Nash equilibrium does indeed exist. Since
$\supp(\profx^*_s)$ has logarithmic size, the row player can simply transmit to
the column player all payoffs $R_{i\js_s}$ for which $i \in \supp(\profx^*_s)$,
and the column player can then send back a row corresponding to a pure Nash
equilibrium.

In conclusion, we have shown that a $(0.5 + \epsilon)$-WSNE can be found in
randomized expected-polynomial-time using $O\left(\frac{\log^2
n}{\epsilon^2}\right)$ communication, which completes the proof of
Theorem~\ref{thm:zo-comm}.

\section{Proof of Lemma~\ref{lem:bs}}
\label{app:bs}

In this section we assume that Steps~\ref{itm:one} through~\ref{itm:four} of our
algorithm did not return a $(\frac{2}{3} - z)$-WSNE, and that neither $\js$ nor
$\jp$ contained a pure $(\frac{2}{3} - z)$-WSNE. We show that, under these
assumptions, the rows $b$ and $s$ required by Step~\ref{itm:five} do indeed
exist.

\paragraph{\bf Probability bounds.}

We begin by proving bounds on the amount of probability that $\xs$ can place on
$\sr$ and $\brs$. The following lemma uses the fact that $\xs$ secures an
expected payoff of at least $\frac{2}{3} - z$ to give an upper bound on the
amount of probability that $\xs$ can place on $\sr$. To simplify notation, we
use $\pr(\brs)$ to denote the probability assigned by $\xs$ to the rows in
$\brs$, and we use $\pr(\sr)$ to denote the probability assigned by $\xs$ to the
rows in $\sr$.

\begin{lemma}
\label{lem:sprob}
$\pr(\sr) \leq \frac{1+3z}{2-3z}$.
\end{lemma}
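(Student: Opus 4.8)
The plan is to use the fact that $\xs = \profx^*$ is a min-max strategy of $(R,-R)$ securing value $v_r > \frac{2}{3} - z$ (this is exactly the regime in which Step~\ref{itm:four} is reached, so $v_r > 2/3 - z$ from the failure of Step~\ref{itm:two}). Since $\js$ is a pure best response to $\profx^*$, the payoff $(\xs^T R)_{\js}$ is the maximum column payoff against $\xs$, and since $\profx^*$ secures $v_r$ against every column, in particular $(\xs^T R)_{\js} \ge v_r > \frac{2}{3} - z$. So the row player's expected payoff in the column $\js$ is more than $\frac{2}{3} - z$.

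Next I would split this expected payoff according to whether a row lies in $\sr$ or $\brs$ and bound each piece crudely by the largest payoff value available there. By definition every row $i \in \sr$ has $R_{i\js} < \frac{1}{3} + z$, and every row $i \in \brs$ trivially has $R_{i\js} \le 1$. Writing $p = \pr(\sr)$ so that $\pr(\brs) = 1 - p$, this gives
\begin{equation*}
\frac{2}{3} - z < (\xs^T R)_{\js} = \sum_{i \in \sr} \xs_i R_{i\js} + \sum_{i \in \brs} \xs_i R_{i\js} < p \left( \tfrac{1}{3} + z \right) + (1 - p) \cdot 1.
\end{equation*}
Rearranging the inequality $\frac{2}{3} - z < 1 - p\left(\frac{2}{3} - z\right)$, i.e.\ $p\left(\frac{2}{3} - z\right) < \frac{1}{3} + z$, yields $p < \frac{1/3 + z}{2/3 - z} = \frac{1 + 3z}{2 - 3z}$, which is the claimed bound (with the inequality being strict, hence certainly $\le$).

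I do not expect any real obstacle here — this is a one-line averaging argument. The only point requiring a little care is justifying that $(\xs^T R)_{\js} \ge \frac{2}{3} - z$: one must invoke that $\profx^*$ is a min-max (value-securing) strategy of the zero-sum game $(R,-R)$ and that reaching Step~\ref{itm:four} means Step~\ref{itm:two} failed, so $v_r > \frac{2}{3} - z$; combined with $\js$ being a best response this pins the column-$\js$ expected payoff above $\frac{2}{3} - z$. Everything else is elementary algebra, and the strictness of Markov-type bounds is not even needed — a weak inequality suffices for the statement.
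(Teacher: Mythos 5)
Your proof is correct and is essentially the paper's proof unrolled: the paper applies Markov's inequality to $T = 1 - R_{i\js}$ with $i$ sampled from $\xs$, which is exactly the averaging/decomposition bound you wrote out by hand, so the two arguments coincide. One small slip worth fixing: $\js$ being a pure best response for the \emph{column} player against $\profx^*$ means $\js$ maximizes $(\xs^T C)_j$, not $(\xs^T R)_j$, so the clause claiming ``$(\xs^T R)_{\js}$ is the maximum column payoff against $\xs$'' is wrong — but it is also unused, since your subsequent (correct) observation that the min-max strategy $\profx^*$ secures $v_r > \tfrac{2}{3} - z$ against \emph{every} column, and in particular against $\js$, is all that the argument needs.
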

\begin{proof}
We will prove our claim using Markov's inequality. Consider the random variable
$T = 1 - R_{i\js}$ where $i$ is sampled from \xs. Since by our assumption
the expected payoff of the row player is greater than $2/3 -z$ we get that
$E(T) \leq 1/3 + z$. If we apply Markov's inequality we get
\begin{align*}
Pr(T \geq \frac{2}{3} - z) \leq \frac{E(T)}{\frac{2}{3} - z} \leq \frac{1+3z}{2-3z}
\end{align*}
which is the claimed result.
\qed
\end{proof}

Next we show an upper bound on $\pr(\brs)$. Here we use the fact that $\js$ does
not contain a $(\frac{2}{3} - z)$-WSNE to argue that all column player payoffs
in $\brs$ are smaller than $\frac{1}{3} + z$. Since we know that the payoff of
$\js$ against $\xs$ is at least $\frac{2}{3} -z$, this can be used to prove a
upper bound on the amount of probability that $\xs$ assigns to $\brs$.

\begin{lemma}
\label{lem:bprob}
$\pr(\brs) \leq \frac{1+3z}{2-3z}$.
\end{lemma}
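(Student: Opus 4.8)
The plan is to combine two facts. The first is a structural fact about column $\js$: every row in $\brs$ yields the column player a payoff strictly below $\frac{1}{3}+z$ in column $\js$. The second is that, by precondition~\ref{itm:pre-two} of Lemma~\ref{lem:bs}, column $\js$ still guarantees the column player expected payoff at least $\frac{2}{3}-z$ against $\xs$. Averaging these two facts then forces $\xs$ to place little probability on $\brs$.

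To establish the first fact, I would fix any $i \in \brs$ and consider the pure profile $(i,\js)$. By definition of $\brs$ we have $R_{i\js} \geq \frac{1}{3}+z$, and since all payoffs lie in $[0,1]$ the row player's pure strategy regret in $(i,\js)$ is $\max_{i'} R_{i'\js} - R_{i\js} \leq 1 - (\frac{1}{3}+z) = \frac{2}{3}-z$. Since $(i,\js)$ is not a $(\frac{2}{3}-z)$-WSNE (precondition~\ref{itm:pre-four}), the failure must be charged to the column player: their pure strategy regret must exceed $\frac{2}{3}-z$, i.e.\ $\max_j C_{ij} - C_{i\js} > \frac{2}{3}-z$, and using $\max_j C_{ij} \leq 1$ this rearranges to $C_{i\js} < \frac{1}{3}+z$.

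For the second step, note that $\brs$ and $\sr$ partition $\supp(\xs)$, so $\pr(\brs) + \pr(\sr) = 1$. Writing the payoff of $\js$ against $\xs$ as $\sum_{i \in \brs} \xs_i C_{i\js} + \sum_{i \in \sr} \xs_i C_{i\js}$, bounding $C_{i\js} < \frac{1}{3}+z$ on $\brs$ and $C_{i\js} \leq 1$ on $\sr$, and invoking precondition~\ref{itm:pre-two}, gives
\[
\frac{2}{3}-z \;\leq\; \pr(\brs)\Bigl(\tfrac{1}{3}+z\Bigr) + \bigl(1-\pr(\brs)\bigr) \;=\; 1 - \pr(\brs)\Bigl(\tfrac{2}{3}-z\Bigr).
\]
Since $\frac{2}{3}-z>0$ for $z<\frac{1}{24}$, solving for $\pr(\brs)$ yields $\pr(\brs) \leq \frac{1/3+z}{2/3-z} = \frac{1+3z}{2-3z}$, which is the claim. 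The only part that needs any care is the first step --- correctly blaming the column player for the failure of the WSNE condition --- and this is immediate once one observes that membership in $\brs$ is exactly what caps the row player's regret at $\frac{2}{3}-z$; everything else is a one-line averaging argument in the spirit of Lemma~\ref{lem:sprob}.
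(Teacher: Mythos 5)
Your proposal is correct and follows essentially the same approach as the paper: deduce $C_{i\js} < \frac{1}{3}+z$ for every $i \in \brs$ from the absence of a pure $(\frac{2}{3}-z)$-WSNE in column $\js$, then average against the lower bound on $\js$'s payoff against $\xs$ and solve for $\pr(\brs)$. You merely spell out the ``blame the column player'' step more explicitly than the paper does; the argument and the resulting inequality are identical.
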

\begin{proof}
Since there is no $i \in \supp(\xs)$ such that $(i, \js)$ is a pure
$(\frac{2}{3} - z)$-WSNE , and since each row $i \in \brs$ satisfies $R_{i\js}
\ge \frac{1}{3} + z$, we must have that $C_{i\js} < \frac{1}{3} + z$ for every
$i \in B$. By assumption we know that $C^T_{\js}
\xs > 2/3 -z$. So, we have the following inequality:
\begin{equation*}
\frac{2}{3} - z < \pr(\brs) \cdot (\frac{1}{3} + z) + \bigl(1- \pr(\brs)\bigr) \cdot 1.
\end{equation*}
Solving this inequality for $\pr(\brs)$ gives the desired result.
\qed
\end{proof}

\paragraph{\bf Payoff inequalities for $\js$.}

We now show properties about the average payoff obtained from the rows in $\brs$
and $\sr$. Recall that $\xu$ was defined in Step~\ref{itm:four} of our
algorithm, and that it denotes the normalization of the probability mass
assigned by $\xs$ to rows in $\brs$. The following lemma shows that the expected
payoff to the row player in the strategy profile $(\xu, \js)$ is close to $1$.

\begin{lemma}
\label{lem:brjs}
We have $(\xu^T \cdot R)_{\js} > \frac{1 - 6z}{1 + 3z}$.
\end{lemma}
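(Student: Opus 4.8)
The plan is to exploit the fact that $\xs$ secures expected payoff at least $\frac23 - z$ against $\js$, together with the upper bound on $\pr(\sr)$ from Lemma~\ref{lem:sprob}, to deduce that the conditional expected payoff over $\brs$ must be large. First I would decompose the expected payoff $(\xs^T \cdot R)_{\js}$ according to the partition $\supp(\xs) = \sr \cup \brs$:
\begin{equation*}
(\xs^T \cdot R)_{\js} = \sum_{i \in \sr} \xs_i R_{i\js} + \sum_{i \in \brs} \xs_i R_{i\js}
= \sum_{i \in \sr} \xs_i R_{i\js} + \pr(\brs) \cdot (\xu^T \cdot R)_{\js},
\end{equation*}
where the last equality is just the definition of $\xu$ as the normalization of the mass $\xs$ places on $\brs$. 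By definition of $\sr$, every $i \in \sr$ satisfies $R_{i\js} < \frac13 + z$, so the first sum is strictly less than $\pr(\sr) \cdot (\frac13 + z)$. Combined with the hypothesis $(\xs^T \cdot R)_{\js} \ge \frac23 - z$, this yields
\begin{equation*}
\pr(\brs) \cdot (\xu^T \cdot R)_{\js} > \frac23 - z - \pr(\sr)\left(\frac13 + z\right).
\end{equation*}

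Next I would bound the right-hand side from below. Using $\pr(\sr) \le \frac{1+3z}{2-3z}$ from Lemma~\ref{lem:sprob} (and noting the coefficient $-(\frac13+z)$ is negative, so we plug in the upper bound on $\pr(\sr)$), a short computation gives that the right-hand side is at least $\frac{1-6z}{2-3z} \cdot \pr(\brs)$ — in fact I expect the algebra to show the bracketed expression $\frac23 - z - \pr(\sr)(\frac13+z)$ equals exactly $\frac{(1-6z)\,\pr(\brs)}{?}$ only after also using that $\pr(\brs) = 1 - \pr(\sr)$ when $\supp(\xs) = \sr \cup \brs$; more carefully, substituting $\pr(\sr) = 1 - \pr(\brs)$ turns the inequality into $\pr(\brs)(\xu^T\cdot R)_{\js} > \frac13 - 2z + \pr(\brs)(\frac13 + z)$, i.e. $\pr(\brs)\big((\xu^T\cdot R)_{\js} - \frac13 - z\big) > \frac13 - 2z$. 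Then dividing by $\pr(\brs)$ and using the upper bound $\pr(\brs) \le \frac{1+3z}{2-3z}$ from Lemma~\ref{lem:bprob} gives $(\xu^T \cdot R)_{\js} - \frac13 - z > \frac{(1/3 - 2z)(2-3z)}{1+3z}$, and simplifying the right-hand side and adding $\frac13 + z$ should produce exactly $\frac{1-6z}{1+3z}$.

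The main obstacle is purely bookkeeping: getting the direction of every inequality right when substituting the bounds on $\pr(\sr)$ and $\pr(\brs)$, since one substitution uses an upper bound multiplied by a negative coefficient and the other divides through by a positive quantity bounded above. I would double-check by verifying the identity $\frac13 + z + \frac{(1/3 - 2z)(2-3z)}{1+3z} = \frac{1-6z}{1+3z}$ directly, clearing denominators to confirm $(\frac13+z)(1+3z) + (\frac13 - 2z)(2-3z) = 1 - 6z$. One should also note for which range of $z$ all these denominators are positive and the bounds are meaningful; since $z < \frac{1}{24}$, we have $2 - 3z > 0$ and $1 + 3z > 0$, so no sign issues arise there. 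This gives the strict inequality claimed.
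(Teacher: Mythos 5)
Your overall approach is sound and, in fact, slightly stronger than the paper's — but the final algebraic identity you propose to check is false, and you would catch this if you did the check. You claim $(\tfrac13+z)(1+3z) + (\tfrac13-2z)(2-3z) = 1-6z$. Expanding instead gives
\begin{equation*}
\bigl(\tfrac13+z\bigr)(1+3z) + \bigl(\tfrac13-2z\bigr)(2-3z) = \bigl(\tfrac13 + 2z + 3z^2\bigr) + \bigl(\tfrac23 - 5z + 6z^2\bigr) = 1 - 3z + 9z^2,
\end{equation*}
so your derivation actually yields $(\xu^T R)_{\js} > \frac{1 - 3z + 9z^2}{1+3z}$, not $\frac{1-6z}{1+3z}$. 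The lemma still follows from your argument, because $1 - 3z + 9z^2 - (1-6z) = 3z(1+3z) \ge 0$ on the relevant range, so your bound implies the stated one — but the proposal as written asserts an identity that does not hold.

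The reason your route gives a tighter bound is worth noting: you substitute the exact relation $\pr(\sr) = 1 - \pr(\brs)$ before applying a probability bound, and then only invoke a single upper bound on $\pr(\brs)$ (Lemma~\ref{lem:bprob}). The paper instead treats $\pr(\sr)$ and $\pr(\brs)$ as if they were independent and applies the upper bound $\frac{1+3z}{2-3z}$ to each in turn — once to $\pr(\sr)$ to get $\sum_{i \in \brs} \xs_i R_{i\js} > \frac{1-6z}{2-3z}$, and then again to $\pr(\brs)$ when dividing — and the two inflated bounds cannot be simultaneously tight since they sum to $1$. That double use is what produces the weaker constant $\frac{1-6z}{1+3z}$. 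Your version buys a marginally better inequality at no extra cost; the paper's version gives cleaner cancellation ($\frac{2-3z}{1+3z} \cdot \frac{1-6z}{2-3z}$) and a tidier closed form, which presumably is why they chose it.

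To repair the writeup: replace the claimed identity with the computation above, conclude $(\xu^T R)_{\js} > \frac{1-3z+9z^2}{1+3z}$, and then observe that this exceeds $\frac{1-6z}{1+3z}$ because $3z(1+3z) \ge 0$. You should also make explicit that $\pr(\brs) > 0$ (otherwise the division is undefined), which follows since $\brs = \emptyset$ would force $(\xs^T R)_{\js} < \tfrac13 + z < \tfrac23 - z$ for $z < \tfrac1{24}$, contradicting Condition~\ref{itm:pre-one} of Lemma~\ref{lem:bs}; and that $\tfrac13 - 2z > 0$ on this range, which you already flagged implicitly when checking signs.
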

\begin{proof}
By definition we have that:
\begin{equation}
\label{eqn:bjrs}
(\xu^T \cdot R)_{\js} = \frac{1}{\pr(\brs)} \cdot
\sum_{i \in \brs} \xs_i \cdot R_{i\js}.
\end{equation}
We begin by deriving a lower bound for $\sum_{i \in \brs} \xs_i \cdot R_{i\js}$.
Using the fact that $\xs$ secures an expected payoff of at least $2/3 -z$
against $\js$ and
then applying the bound from Lemma~\ref{lem:sprob} gives:
\begin{align*}
\frac{2}{3}-z & <
\sum_{i \in \brs} \xs_i \cdot R_{i\js} + (\frac{1}{3}+z) \cdot \pr(\sr) \\
& \leq \sum_{i \in \brs} \xs_i \cdot R_{i\js} + (\frac{1}{3}+z)\cdot \frac{1+3z}{2-3z}.
\end{align*}
Hence we can conclude that:
\begin{align*}
\sum_{i \in \brs} \xs_i \cdot R_{i\js} &> \frac{2}{3}-z - \frac{1}{3} \cdot
\frac{(1+3z)^2}{2-3z} \\
& = \frac{1 - 6z}{2 - 3z}.
\end{align*}
Substituting this into Equation~\eqref{eqn:bjrs}, along with the upper bound
on $\pr(\brs)$ from Lemma~\ref{lem:bprob}, allows us to conclude that:
\begin{align*}
(\xu^T \cdot R)_{\js} &\ge \frac{2 - 3z}{1+3z} \cdot
\sum_{i \in \brs} \xs_i \cdot R_{i\js} \\
& > \frac{2 - 3z}{1+3z} \cdot
\frac{1 - 6z}{2 - 3z} \\
& = \frac{1 - 6z}{1 + 3z}.
\end{align*}
\qed
\end{proof}

Next we would like to show a similar bound on the expected payoff to the column
player of the rows in $\sr$. To do this, we define $\xl$ to be the normalisation
of the probability mass that $\xs$ assigns to the rows in $\sr$. More formally,
for each $i \in [n]$, we define:
\begin{equation*}
(\xl)_i = \begin{cases}
\frac{1}{\pr(\sr)} \cdot \xs_i & \text{if $i \in \sr$} \\
0 & \text{otherwise.}
\end{cases}
\end{equation*}
The next lemma shows that the expected payoff to the column player in the
profile $(\xl, \js)$ is close to $1$.

\begin{lemma}
\label{lem:scjs}
We have $(\xl^T \cdot C)_{\js} > \frac{1 - 6z}{1 + 3z}.$
\end{lemma}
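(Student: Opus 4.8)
The plan is to prove this by exactly the scheme used for Lemma~\ref{lem:brjs}, but tracking the column player's payoffs against $\js$ rather than the row player's. The structural input we need was already isolated inside the proof of Lemma~\ref{lem:bprob}: since no $(i,\js)$ with $i \in \supp(\xs)$ is a pure $(\frac23 - z)$-WSNE and every $i \in \brs$ satisfies $R_{i\js} \ge \frac13 + z$, we must have $C_{i\js} < \frac13 + z$ for every $i \in \brs$. In other words, the rows carrying the mass of $\xu$ contribute little to the column player's payoff against $\js$, so most of the column player's payoff $C^T_{\js}\xs > \frac23 - z$ has to come from the rows in $\sr$.

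First I would split the assumed bound $C^T_{\js}\xs > \frac23 - z$ according to $\supp(\xs) = \sr \cup \brs$, bound the $\brs$-contribution from above by $(\frac13+z)\cdot\pr(\brs)$ using $C_{i\js} < \frac13 + z$, and then apply the upper bound $\pr(\brs) \le \frac{1+3z}{2-3z}$ from Lemma~\ref{lem:bprob}. Rearranging yields
\[
\sum_{i \in \sr} \xs_i \cdot C_{i\js} > \frac23 - z - \Bigl(\frac13 + z\Bigr)\cdot\frac{1+3z}{2-3z} = \frac{1-6z}{2-3z},
\]
which is the same algebraic simplification that occurs in Lemma~\ref{lem:brjs}. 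Dividing by $\pr(\sr)$ and invoking the upper bound $\pr(\sr) \le \frac{1+3z}{2-3z}$ from Lemma~\ref{lem:sprob} (so that $1/\pr(\sr) \ge \frac{2-3z}{1+3z}$, which is legitimate since $\frac{1-6z}{2-3z} > 0$ for $z < \frac16$) then gives
\[
(\xl^T \cdot C)_{\js} = \frac{1}{\pr(\sr)}\sum_{i \in \sr} \xs_i \cdot C_{i\js} > \frac{2-3z}{1+3z}\cdot\frac{1-6z}{2-3z} = \frac{1-6z}{1+3z}.
\]

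There is no genuine obstacle here; the argument is a routine transcription of Lemma~\ref{lem:brjs} with $C$ in place of $R$ and with $\sr$, $\brs$ interchanged in their roles. The one detail worth a sentence is that $\xl$ is well defined, i.e.\ that $\pr(\sr) > 0$: this follows from Lemma~\ref{lem:bprob}, since for $z < \frac16$ (in particular throughout our range $0 \le z < \frac1{24}$) we have $\frac{1+3z}{2-3z} < 1$, hence $\pr(\brs) < 1$ and therefore $\pr(\sr) > 0$.
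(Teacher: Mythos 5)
Your proposal is correct and follows essentially the same route as the paper: split $C^T_{\js}\xs > \frac23 - z$ over $\sr$ and $\brs$, bound the $\brs$-contribution using $C_{i\js} < \frac13 + z$ for $i \in \brs$ (from the no-pure-WSNE assumption) together with Lemma~\ref{lem:bprob}, then normalize by $\pr(\sr)$ using Lemma~\ref{lem:sprob}. Your version is in fact slightly more careful than the paper's, which contains a sign typo ($1/3 - z$ where $1/3 + z$ is meant), cites Lemma~\ref{lem:bprob} rather than Lemma~\ref{lem:sprob} for the $\pr(\sr)$ bound, and carries a copy-paste slip in the final display; you also explicitly note why $\pr(\sr) > 0$, which the paper leaves implicit.
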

\begin{proof}
By definition we have that:
\begin{equation}
\label{eqn:scjs}
(\xl^T \cdot C)_{\js} = \frac{1}{\pr(\sr)} \cdot
\sum_{i \in \sr} \xs_i \cdot C_{i\js}.
\end{equation}
We begin by deriving a lower bound for $\sum_{i \in \sr} \xs_i \cdot C_{i\js}$.
By assumption, we know that $C^T_{\js}\xs > 2/3 -z$. Moreover, since $\js$ does
not contain a $(\frac{2}{3} - z)$-WSNE we have that all rows $i$ in $\brs$
satisfy $C_{i\js} < 1/3-z$. If we combine these facts that with
Lemma~\ref{lem:bprob} we obtain:
\begin{align*}
\frac{2}{3}-z & <
\sum_{i\in \sr} \xs_i \cdot C_{i \js} + (\frac{1}{3}+z)\cdot \pr(\brs)\\
& \leq
\sum_{i\in \sr} \xs_i \cdot C_{i \js} + (\frac{1}{3}+z)\cdot \frac{1+3z}{2-3z}.
\end{align*}
Hence we can conclude that:
\begin{align*}
\sum_{i\in \sr} \xs_i \cdot C_{i \js} &> \frac{2}{3}-z - \frac{1}{3} \cdot
\frac{(1+3z)^2}{2-3z} \\
& = \frac{1 - 6z}{2 - 3z}.
\end{align*}
Substituting this into Equation~\eqref{eqn:scjs}, along with the upper bound
on $\pr(\sr)$ from Lemma~\ref{lem:bprob}, allows us to conclude that:
\begin{align*}
(\xu^T \cdot R)_{\js} &\ge \frac{2 - 3z}{1+3z} \cdot
\sum_{i \in \brs} \xs_i \cdot R_{i\js} \\
& > \frac{2 - 3z}{1+3z} \cdot
\frac{1 - 6z}{2 - 3z} \\
& = \frac{1 - 6z}{1 + 3z}.
\end{align*}
\qed
\end{proof}

\paragraph{\bf Payoff inequalities for $\jp$.}

We now want to prove similar inequalities for the column $\jp$. The next lemma
shows that the expected payoff for the column player in the profile $(\xu, \jp)$
is close to $1$. This is achieved by first showing a lower bound on the payoff
to the column player in the profile $(\xu, \js)$, and then using the fact that
$\js$ is not a $(\frac{2}{3} - z)$-best-response against $\xu$, and that $\jp$
is a best response against $\xu$.

\begin{lemma}
\label{lem:bcjp}
We have $(\xu^T \cdot C)_{\jp} > \frac{1 - 6z}{1 + 3z}$.
\end{lemma}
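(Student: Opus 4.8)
The plan is to bound the column player's best-response payoff against $\xu$ by first controlling its payoff from $\js$ against $\xu$, and then arguing that switching to $\jp$ gains a further $\tfrac23 - z$. For the first part: since $\js$ is the column player's best response against $\xs$ and Step~\ref{itm:three} did not fire, we have $(\xs^T \cdot C)_{\js} > \tfrac23 - z$. Writing $\xs = \pr(\brs)\,\xu + \pr(\sr)\,\xl$, this reads $\pr(\brs)\,(\xu^T \cdot C)_{\js} + \pr(\sr)\,(\xl^T \cdot C)_{\js} > \tfrac23 - z$; bounding $(\xl^T \cdot C)_{\js} \le 1$ and using $\pr(\sr) \le \frac{1+3z}{2-3z}$ from Lemma~\ref{lem:sprob} (equivalently $\pr(\brs) \ge \frac{1-6z}{2-3z}$), a short rearrangement yields $(\xu^T \cdot C)_{\js} > 1 - \frac{1/3 + z}{\pr(\brs)} \ge \frac{1/3 - 7z + 3z^2}{1 - 6z}$, so $(\xu^T \cdot C)_{\js}$ is close to $\tfrac13$.

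Next I would show that $(\xu, \js)$ is not a $(\tfrac23 - z)$-WSNE. Because $\supp(\xu) \subseteq \brs$ and every $i \in \brs$ satisfies $R_{i\js} \ge \tfrac13 + z$ by definition of $\brs$, the row player's pure strategy regret in $(\xu, \js)$ is automatically at most $1 - (\tfrac13 + z) = \tfrac23 - z$; so if the column player's regret were also at most $\tfrac23 - z$, the profile $(\xu,\js)$ would be a $(\tfrac23-z)$-WSNE, contradicting the fact that Step~\ref{itm:four} did not return one. Since $\jp$ is a pure best response against $\xu$, the column player's regret in $(\xu, \js)$ equals exactly $(\xu^T \cdot C)_{\jp} - (\xu^T \cdot C)_{\js}$, so this difference must exceed $\tfrac23 - z$.

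Combining the two estimates, $(\xu^T \cdot C)_{\jp} > (\xu^T \cdot C)_{\js} + \tfrac23 - z > \frac{1/3 - 7z + 3z^2}{1 - 6z} + \tfrac23 - z$, and the proof finishes once one checks that the right-hand side is at least $\frac{1-6z}{1+3z}$ for every $0 \le z < \tfrac{1}{24}$. This last algebraic step is the only real obstacle: the two sides coincide at $z = 0$ — where the statement is vacuous anyway, the hypotheses there forcing $(\xu^T \cdot C)_{\js}$ to be both above and below $\tfrac13$ — and the inequality holds with room to spare at the value $z \approx 0.0139$ actually used by the algorithm, but verifying it uniformly on the interval calls for a somewhat tedious manipulation of the resulting polynomial inequality. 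The rest is routine bookkeeping with the probability bound of Lemma~\ref{lem:sprob}.
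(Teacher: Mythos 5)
Your proof takes essentially the same route as the paper: both proofs first bound $(\xu^T C)_{\js}$ from below using the fact that $C^T_{\js}\xs > \tfrac23 - z$ together with the decomposition $\xs = \pr(\brs)\,\xu + \pr(\sr)\,\xl$, and then add $\tfrac23 - z$ via the best-response gap at $\jp$. The only real difference is a bookkeeping one: where you substitute $\pr(\sr) = 1-\pr(\brs)$ and then use the lower bound $\pr(\brs) \ge \frac{1-6z}{2-3z}$ (from Lemma~\ref{lem:sprob}), the paper instead applies the two upper bounds $\pr(\brs) \le \frac{1+3z}{2-3z}$ and $\pr(\sr) \le \frac{1+3z}{2-3z}$ (Lemmas~\ref{lem:sprob} and~\ref{lem:bprob}), arriving at the slightly weaker intermediate estimate $(\xu^T C)_{\js} > \frac{1}{3}\cdot\frac{1-21z+9z^2}{1+3z}$. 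Your worry about the last algebraic step is unfounded: with the paper's weaker bound the final sum collapses to the exact identity
\begin{equation*}
\frac{1}{3}\cdot\frac{1-21z+9z^2}{1+3z} + \frac{2}{3} - z = \frac{1-6z}{1+3z},
\end{equation*}
and since $\tfrac13 - 7z + 3z^2 > 0$ and $1 - 6z \le 1 + 3z$ on the whole interval $0 \le z < \tfrac{1}{24}$, your bound $\frac{1/3 - 7z + 3z^2}{1-6z}$ dominates the paper's term-by-term, so your inequality holds automatically with no further case analysis needed.
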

\begin{proof}
We first establish a lower bound on $(\xu^T \cdot C)_{\js}$. By assumption, we
know that $C^T_{\js} \xs > 2/3 -z$. Using this fact, along with the bounds from
Lemmas~\ref{lem:sprob} and~\ref{lem:bprob} gives:
\begin{align*}
\frac{2}{3} - z &< \pr(\brs) \cdot (\xu^T \cdot C)_{\js} + \pr(\sr) \cdot 1\\
& \le \frac{1+3z}{2-3z} \cdot (\xu^T \cdot C)_{\js} + \frac{1+3z}{2-3z}.
\end{align*}
Solving this inequality for $(\xu^T \cdot C)_{\js}$ yields:
\begin{equation*}
(\xu^T \cdot C)_{\js} > \frac{1}{3} \cdot \frac{1 - 21z + 9 z^2}{1 + 3z}.
\end{equation*}

Now we can prove the lower bound on $(\xu^T \cdot C)_{\jp}$. Since $\js$ is not
a $(\frac{2}{3} - z)$-best-response against $\xu$, and since $\jp$ is a best
response against $\xu$ we obtain:
\begin{align*}
(\xu^T \cdot C)_{\jp} &> (\xu^T \cdot C)_{\js} + \frac{2}{3} - z \\
(\xu^T \cdot C)_{\jp} & > \frac{1}{3} \cdot \frac{1 - 21z + 9 z^2}{1 + 3z} +
\frac{2}{3} - z \\
&= \frac{1 - 6z}{1 + 3z}.
\end{align*}
\qed
\end{proof}

The only remaining inequality that we require is a lower bound on the expected
payoff to the row player in the profile $(\xl, \jp)$. However, before we can do
this, we must first prove an upper bound on the expected payoff to the row
player in $(\xu, \jp)$, which we do in the following lemma. Here we first prove
that most of the probability mass of $\xu$ is placed on rows $i$ in which
$C_{i\jp} > \frac{1}{3} + z$, which when combined with the fact that there is no
$i \in \supp(\xs)$ such that $(i, \jp)$ is a pure $(\frac{2}{3} - z)$-WSNE, is
sufficient to provide an upper bound.

\begin{lemma}
\label{lem:brjp-upper}
We have $(\xu^T \cdot R)_{\jp} < \frac{1}{3} \cdot \frac{1 + 33z + 9 z^2}{1 + 3z}.$
\end{lemma}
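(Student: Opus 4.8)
The plan is to bound $(\xu^T\cdot R)_{\jp}$ by splitting the support of $\xu$ — which is exactly $\brs$ — according to the column player's payoff against $\jp$. Write $q$ for the probability that $\xu$ assigns to rows $i$ with $C_{i\jp}\le\frac13+z$. On the remaining mass, where $C_{i\jp}>\frac13+z$, I use the fact that $(i,\jp)$ is not a pure $(\frac23-z)$-WSNE for any $i\in\supp(\xs)\supseteq\brs$: since $C_{i\jp}\ge\frac13+z$, this forces $R_{i\jp}<\frac13+z$ (the implication is exactly the one already used for column $\js$ in Lemmas~\ref{lem:bprob} and~\ref{lem:scjs}). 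On the mass $q$, I only use the trivial bound $R_{i\jp}\le 1$. This gives
\begin{equation*}
(\xu^T\cdot R)_{\jp} < (1-q)\left(\frac13+z\right) + q = \frac13+z+q\left(\frac23-z\right).
\end{equation*}

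So the lemma reduces to showing that $q$ is small, namely $q<\frac{27z}{(1+3z)(2-3z)}$: substituting this into the display, the factor $\frac23-z$ cancels against the denominator $2-3z$ to leave $\frac13+z+\frac{9z}{1+3z}$, and putting this over the common denominator $3(1+3z)$ yields exactly $\frac13\cdot\frac{1+33z+9z^2}{1+3z}$.

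To bound $q$ I would apply Markov's inequality to the random variable $T=1-C_{i\jp}$, with $i$ drawn from $\xu$. By Lemma~\ref{lem:bcjp} we have $(\xu^T\cdot C)_{\jp}>\frac{1-6z}{1+3z}$, hence $E[T]<1-\frac{1-6z}{1+3z}=\frac{9z}{1+3z}$. Since $T\ge 0$ and $\frac23-z>0$ (recall $z<\frac1{24}$), Markov gives $q=\pr(T\ge\frac23-z)\le\frac{E[T]}{\frac23-z}<\frac{27z}{(1+3z)(2-3z)}$, as needed. Combining this with the display above and performing the routine simplification completes the proof.

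The argument is short, and I do not expect a genuine obstacle: the only point requiring care is the invocation of the "no pure WSNE" hypothesis in the form used above, and that is identical to its use earlier in this appendix. The delicate estimates of the section are the earlier ones (Lemmas~\ref{lem:sprob}--\ref{lem:bcjp}) that this lemma simply feeds on.
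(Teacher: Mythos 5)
Your proposal is correct and follows essentially the same route as the paper: split the mass of $\xu$ according to whether $C_{i\jp}$ is above or below $\frac13+z$, bound the "small-$C$" mass via Markov's inequality applied to $T=1-C_{i\jp}$ using Lemma~\ref{lem:bcjp}, use the no-pure-WSNE hypothesis to get $R_{i\jp}<\frac13+z$ on the "large-$C$" part, and use the trivial bound $R_{i\jp}\le1$ on the rest. The quantity $\frac{27z}{(1+3z)(2-3z)}$ you derive is exactly the paper's $p=\frac{9z}{(1+3z)(2/3-z)}$, and the final simplification matches; your version is in fact slightly more careful about keeping the inequalities strict.
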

\begin{proof}
We begin by proving an upper bound on the
amount of probability mass assigned by $\xu$ to rows $i$ with $C_{i\jp} <
\frac{1}{3} + z$.
Let $T = 1 - C_{i\jp}$ be a random variable where the row $i$ is sampled
according to $\xu$. Lemma~\ref{lem:bcjp} implies that:
\begin{equation*}
E[T] = 1 - \frac{1 - 6z}{1 + 3z} = \frac{9z}{1+3z}.
\end{equation*}
Observe that $\pr(T \ge 1 - (\frac{1}{3} + z)) = \pr(T \ge \frac{2}{3} - z)$ is
equal to the
amount of probability that
$\xu$ assigns to rows $i$ with $C_{i\jp} < \frac{1}{3} + z$. Applying Markov's
inequality then establishes our bound.
\begin{equation*}
\pr(T \ge \frac{2}{3} - z) \le \frac{\frac{9z}{1+3z}}{\frac{2}{3} -z}. \\
\end{equation*}

So, if $p = \frac{9z}{(1+3z)(2/3 -z)}$ then we know that at least $1-p$
probability is assigned by $\xu$ to rows $i$ such that $C_{i\jp}> \frac{1}{3} +
z$. Since we have assumed that there is no $i \in \supp(\xs)$ such that $(i,
\jp)$ is a pure $(\frac{2}{3} - z)$-WSNE, we know that any such row $i$ must
satisfy $R_{i\jp} < \frac{1}{3} + z$. Hence, we obtain the following bound:
\begin{align*}
(\xu^T \cdot R)_{\jp} &< (1-p) \cdot (\frac{1}{3} + z) + p \\
& = \frac{1}{3} \cdot \frac{1 + 33z + 9 z^2}{1 + 3z}.
\end{align*}
\qed
\end{proof}

Finally, we show that the expected payoff to the row player in the profile
$(\xl, \jp)$ is close to $1$. Here we use the fact that $\xs$ is a min-max
strategy along with the bound from Lemma~\ref{lem:brjp-upper} to prove our lower
bound.

\begin{lemma}
\label{lem:srjp}
We have $(\xl^T \cdot R)_{\jp} > \frac{1 - 15z}{1+3z}$.
\end{lemma}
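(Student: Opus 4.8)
The plan is to recover $(\xl^T \cdot R)_{\jp}$ from the min-max guarantee on $\xs$ by peeling off the contribution of the rows in $\brs$, which the earlier lemmas already control. First I would record the decomposition $\xs = \pr(\brs) \cdot \xu + \pr(\sr) \cdot \xl$, which holds because $\sr$ and $\brs$ partition $\supp(\xs)$ and $\xu$, $\xl$ are by construction the normalisations of the restrictions of $\xs$ to $\brs$ and $\sr$; note also that $\pr(\sr) > 0$ here, since otherwise $\xu = \xs$ and then $(\xu^T \cdot C)_{\js} = C^T_{\js}\xs \ge \frac{2}{3} - z > \frac{1}{3} + z$ would have made Step~\ref{itm:four} return, contradicting that we are reasoning inside Step~\ref{itm:five}. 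Evaluating both sides against column $\jp$ gives
\begin{equation*}
(\xs^T \cdot R)_{\jp} = \pr(\brs) \cdot (\xu^T \cdot R)_{\jp} + \pr(\sr) \cdot (\xl^T \cdot R)_{\jp}.
\end{equation*}

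Next, using the assumption that $\xs$ secures payoff at least $\frac{2}{3} - z$ against $\jp$ (precondition~\ref{itm:pre-three}), I would rearrange this identity to
\begin{equation*}
\pr(\sr) \cdot (\xl^T \cdot R)_{\jp} \ge \frac{2}{3} - z - \pr(\brs) \cdot (\xu^T \cdot R)_{\jp},
\end{equation*}
and then bound the subtracted term from above. Since $\pr(\brs)$ and $(\xu^T \cdot R)_{\jp}$ are both nonnegative, I can simply multiply the upper bound $\pr(\brs) \le \frac{1+3z}{2-3z}$ of Lemma~\ref{lem:bprob} with the upper bound $(\xu^T \cdot R)_{\jp} < \frac{1}{3}\cdot\frac{1+33z+9z^2}{1+3z}$ of Lemma~\ref{lem:brjp-upper}, obtaining $\pr(\brs)\cdot(\xu^T \cdot R)_{\jp} < \frac{1}{3}\cdot\frac{1+33z+9z^2}{2-3z}$. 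Substituting this in and collecting everything over the common denominator $3(2-3z)$ — where $\frac{2}{3}-z$ contributes numerator $(2-3z)^2 = 4 - 12z + 9z^2$ — leaves numerator $(4-12z+9z^2) - (1+33z+9z^2) = 3 - 45z$, so
\begin{equation*}
\pr(\sr) \cdot (\xl^T \cdot R)_{\jp} > \frac{1 - 15z}{2-3z}.
\end{equation*}

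Finally I would divide by $\pr(\sr)$: since $z < \frac{1}{24}$ we have $1-15z > 0$, so the right-hand side above is positive, and combining with $\pr(\sr) \le \frac{1+3z}{2-3z}$ from Lemma~\ref{lem:sprob} yields
\begin{equation*}
(\xl^T \cdot R)_{\jp} > \frac{1 - 15z}{2-3z} \cdot \frac{2-3z}{1+3z} = \frac{1 - 15z}{1+3z},
\end{equation*}
which is the claim. This argument is essentially routine algebra; the only points to watch are keeping the inequality directions consistent — in particular that multiplying two upper bounds of nonnegative quantities is legitimate, and that dividing by $\pr(\sr)$ preserves the strict inequality precisely because the bound is positive — and invoking $z < \frac{1}{24}$ wherever positivity of $1-15z$ or $2-3z$ is needed, so I do not expect any genuine obstacle.
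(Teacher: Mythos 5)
Your proof is correct and follows essentially the same route as the paper: decompose $\xs$ over $\brs$ and $\sr$, apply precondition~\ref{itm:pre-three} to lower-bound $(\xs^T R)_{\jp}$, subtract off the $\brs$ contribution using Lemmas~\ref{lem:bprob} and~\ref{lem:brjp-upper}, and then use Lemma~\ref{lem:sprob} to divide out $\pr(\sr)$. The one thing you add that the paper leaves implicit is the observation that $\pr(\sr) > 0$ whenever the algorithm reaches Step~\ref{itm:five}, which is needed for $\xl$ to be well defined; that is a worthwhile extra sentence, but the substance is identical.
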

\begin{proof}
Since $\xs$ is a min-max strategy that secures a value strictly larger than
$\frac{2}{3} - z$, we have:
\begin{equation*}
\frac{2}{3} - z < \pr(\brs) \cdot (\xu^T \cdot R)_{\jp} +
\pr(\sr) \cdot (\xl^T \cdot R)_{\jp}.
\end{equation*}
Substituting the bounds from
Lemmas~\ref{lem:sprob},~\ref{lem:bprob},~and~\ref{lem:brjp-upper} then gives:
\begin{align*}
\frac{2}{3} - z <
\frac{1+3z}{2-3z} \cdot
\frac{1}{3} \cdot \frac{1 + 33z + 9 z^2}{1 + 3z} +
\frac{1+3z}{2-3z} \cdot (\xl^T \cdot R)_{\jp}.
\end{align*}
Solving for $(\xl^T \cdot R)_{\jp}$ then yields the desired result.
\qed
\end{proof}

\paragraph{\bf Finding rows $b$ and $u$.}

So far, we have shown that the expected payoff to the row player in $(\xu, \js)$
is close to $1$, and that the expected payoff to the column player in $(\xu,
\jp)$ is close to $1$. We now show that there exists a row $b \in \brs$ such
that $R_{b\js}$ is close to $1$, and $C_{b\jp}$ is close to $1$, and that there
exists a row $s \in \sr$ in which $C_{s\js}$ and $R_{s\jp}$ are both close to
$1$.  The following lemma uses Markov's inequality to show a pair of probability
bounds that will be critical in showing the existence of $b$.

\begin{lemma}
\label{lem:b-prob}
We have:
\begin{itemize}
\item  $\xu$ assigns strictly more than $0.5$ probability to rows $i$ with $R_{i\js} > 1- \frac{18z}{1 + 3z}$.
\item $\xu$ assigns strictly more than $0.5$ probability to rows
$i$ with $C_{i\jp} > 1- \frac{18z}{1 + 3z}$.
\end{itemize}
\end{lemma}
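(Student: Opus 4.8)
The plan is to derive each of the two bullets by a single application of Markov's inequality to a suitable ``loss'' random variable, in exactly the style of the proofs of Lemmas~\ref{lem:sprob} and~\ref{lem:brjp-upper}, this time feeding in the payoff lower bounds for $\xu$ that we have already established.

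For the first bullet I would set $T = 1 - R_{i\js}$, where the row $i$ is sampled according to $\xu$. Lemma~\ref{lem:brjs} gives $(\xu^T \cdot R)_{\js} > \frac{1-6z}{1+3z}$, hence
\[
E[T] = 1 - (\xu^T \cdot R)_{\js} < 1 - \frac{1-6z}{1+3z} = \frac{9z}{1+3z}.
\]
Since $\frac{18z}{1+3z} > 0$ throughout the relevant range $0 < z < \frac{1}{24}$, Markov's inequality then gives
\[
\pr(T \ge \tfrac{18z}{1+3z}) \le \frac{E[T]}{18z/(1+3z)} < \frac{9z/(1+3z)}{18z/(1+3z)} = \frac{1}{2},
\]
and the event $\{T \ge \frac{18z}{1+3z}\}$ is exactly $\{R_{i\js} \le 1 - \frac{18z}{1+3z}\}$, so its complement $\{R_{i\js} > 1 - \frac{18z}{1+3z}\}$ carries $\xu$-probability strictly more than $\frac12$. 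For the second bullet I would run the identical computation with $T = 1 - C_{i\jp}$, invoking Lemma~\ref{lem:bcjp} in place of Lemma~\ref{lem:brjs} to obtain the same bound $E[T] < \frac{9z}{1+3z}$ and then repeating the Markov step verbatim.

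I expect no genuine obstacle here; the only point to be careful about is strictness. Because Lemmas~\ref{lem:brjs} and~\ref{lem:bcjp} are stated with strict inequalities, the expectation bound $E[T] < \frac{9z}{1+3z}$ is strict, and this is precisely what upgrades the Markov estimate to ``$<\frac12$'' and hence the conclusion to ``strictly more than $0.5$ probability'' rather than ``at least $0.5$''. (The degenerate value $z=0$ never arises under the hypotheses, since Lemma~\ref{lem:brjs} would then force $(\xu^T \cdot R)_{\js} > 1$.)
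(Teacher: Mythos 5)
Your proof is correct and follows essentially the same argument as the paper: apply Markov's inequality to $T = 1 - R_{i\js}$ (resp.\ $T = 1 - C_{i\jp}$) sampled from $\xu$, using the expectation bounds from Lemmas~\ref{lem:brjs} and~\ref{lem:bcjp}. Your remarks on how strictness of those lemmas carries through to the strict ``$< 0.5$'' conclusion are a welcome clarification of a point the paper leaves implicit.
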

\begin{proof}
We begin with the first case. Consider the random variable $T = 1 - R_{i\js}$
where $i$ is sampled from \xu. By Lemma~\ref{lem:brjs}, we have that:
\begin{equation*}
E[T] < 1 - \frac{1 - 6z}{1 + 3z} = \frac{9z}{1 + 3z}.
\end{equation*}
We have that $T \ge \frac{18z}{1 + 3z}$ whenever $R_{i\js} \le 1 -
\frac{18z}{1 + 3z}$, so we can apply Markov's inequality to obtain:
\begin{equation*}
\pr(T \ge \frac{18z}{1 + 3z}) < \frac{\frac{9z}{1 + 3z}}{\frac{18z}{1 + 3z}} =
0.5.
\end{equation*}

The proof of the second case is identical to the proof given above, but uses the
(identical) bound from Lemma~\ref{lem:bcjp}.
\qed
\end{proof}

The next lemma uses the same techniques to prove a pair of probability bounds
that will be used to prove the existence of $s$.
\begin{lemma}
\label{lem:s-prob}
We have:
\begin{itemize}
\item $\xl$ assigns strictly more than $\frac{1}{3}$ probability to
rows $i$ with $C_{i\js} > 1- \frac{27z}{1 + 3z}$.
\item
$\xl$ assigns strictly more than $\frac{2}{3}$ probability to
rows $i$ with $R_{i\jp} > 1 - \frac{27z}{1 + 3z}$.
\end{itemize}
\end{lemma}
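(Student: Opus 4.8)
The plan is to prove both bullets by Markov's inequality, exactly mirroring the argument of Lemma~\ref{lem:b-prob} but sampling a row $i$ from $\xl$ instead of from $\xu$, and feeding in the two payoff inequalities we have already established for the rows in $\sr$: Lemma~\ref{lem:scjs}, which gives $(\xl^T \cdot C)_{\js} > \frac{1-6z}{1+3z}$, and Lemma~\ref{lem:srjp}, which gives $(\xl^T \cdot R)_{\jp} > \frac{1-15z}{1+3z}$. For the first bullet I work with the nonnegative random variable $T = 1 - C_{i\js}$ and for the second with $T' = 1 - R_{i\jp}$, where $i$ is drawn from $\xl$. In each case, the probability that $\xl$ places on the \emph{bad} rows (those violating the stated inequality) is exactly $\pr(T \ge \frac{27z}{1+3z})$, respectively $\pr(T' \ge \frac{27z}{1+3z})$, so a Markov tail bound translates directly into the claimed lower bound on the mass of the good rows.

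The first bullet is immediate: Lemma~\ref{lem:scjs} gives $E[T] = 1 - (\xl^T \cdot C)_{\js} < \frac{9z}{1+3z}$, so the threshold $\frac{27z}{1+3z}$ is three times $E[T]$ and Markov's inequality yields $\pr(T \ge \frac{27z}{1+3z}) < \frac{1}{3}$. Hence $\xl$ places strictly more than $\frac{2}{3}$ probability, and in particular strictly more than $\frac{1}{3}$, on rows $i$ with $C_{i\js} > 1 - \frac{27z}{1+3z}$, which is exactly what the first bullet asks.

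The second bullet is where the work lies. Running the same computation with $T' = 1 - R_{i\jp}$, Lemma~\ref{lem:srjp} only gives $E[T'] < \frac{18z}{1+3z}$, and a direct Markov bound at the threshold $\frac{27z}{1+3z}$ yields $\pr(T' \ge \frac{27z}{1+3z}) < \frac{2}{3}$, i.e.\ mass strictly more than $\frac{1}{3}$ on the good rows — short of the claimed $\frac{2}{3}$. To reach $\frac{2}{3}$ one needs the sharper estimate $(\xl^T \cdot R)_{\jp} > \frac{1-6z}{1+3z}$ (so that $E[T'] < \frac{9z}{1+3z}$ and the threshold becomes $3E[T']$). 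My plan for this is to revisit the proof of Lemma~\ref{lem:srjp}: rather than bounding $\pr(\brs) \le \frac{1+3z}{2-3z}$ inside $\frac{2}{3} - z < \pr(\brs) \cdot (\xu^T \cdot R)_{\jp} + \pr(\sr) \cdot (\xl^T \cdot R)_{\jp}$, keep $p = \pr(\brs)$ and $\pr(\sr) = 1-p$ symbolic, use the exact range $\frac{1-6z}{2-3z} \le p \le \frac{1+3z}{2-3z}$ from Lemmas~\ref{lem:sprob} and~\ref{lem:bprob} together with the upper bound on $(\xu^T \cdot R)_{\jp}$ from Lemma~\ref{lem:brjp-upper}, and then optimise over $p$ (the solved-for expression for $(\xl^T \cdot R)_{\jp}$ is monotone in $p$, so the worst case is at an endpoint). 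I expect the main obstacle to be precisely whether this refined bound actually reaches $\frac{1-6z}{1+3z}$ on the relevant range $0 \le z < \frac{1}{24}$; if it falls short, the fallback is that Step~\ref{itm:five} only needs the two probability masses supplied by this lemma to sum to strictly more than $1$, and the pair ``mass $> \frac{2}{3}$ on the $C_{i\js}$-good rows, mass $> \frac{1}{3}$ on the $R_{i\jp}$-good rows'', which Markov delivers with no extra effort, already forces the two row sets to intersect, so the existence of the row $s$ goes through with the two fractions $\frac{1}{3}$ and $\frac{2}{3}$ recorded in whichever order the argument naturally produces.
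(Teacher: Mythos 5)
Your method is the same as the paper's: sample $i$ from $\xl$, set $T = 1 - C_{i\js}$ and $T' = 1 - R_{i\jp}$, and apply Markov's inequality at the threshold $\frac{27z}{1+3z}$ using the expectation bounds from Lemmas~\ref{lem:scjs} and~\ref{lem:srjp}. Your first-bullet computation is exactly the paper's, and you have correctly diagnosed the only delicate point: since Lemma~\ref{lem:srjp} gives $E[T'] < \frac{18z}{1+3z}$, Markov yields bad mass strictly less than $\frac{2}{3}$, i.e.\ good mass strictly more than $\frac{1}{3}$ --- not the $\frac{2}{3}$ claimed in the second bullet. The paper's own proof performs precisely this computation and therefore also establishes only ``$>\frac{1}{3}$'' for the $R_{i\jp}$-good rows (and ``$>\frac{2}{3}$'' for the $C_{i\js}$-good rows); the two fractions in the lemma statement are evidently swapped relative to what its proof shows. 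Your primary plan of sharpening Lemma~\ref{lem:srjp} to $(\xl^T \cdot R)_{\jp} > \frac{1-6z}{1+3z}$ would not succeed: keeping $p = \pr(\brs)$ symbolic in $\frac{2}{3}-z < p\,(\xu^T R)_{\jp} + (1-p)\,(\xl^T R)_{\jp}$ and optimizing, the worst case is at the smallest admissible $p = \frac{1-6z}{2-3z}$, which is exactly where the paper's substitution effectively evaluates, so to first order in $z$ you recover the same $1-15z$ bound and cannot reach $1-9z$. But your fallback is precisely the paper's actual argument: the proof of Lemma~\ref{lem:bs} only needs the union bound, i.e.\ that the two bad masses ($<\frac{1}{3}$ and $<\frac{2}{3}$) sum to strictly less than $1$, which is what your Markov computations deliver. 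So your proposal is correct in substance and identical in method to the paper's proof, with the (shared) caveat that the two bullets come out with the fractions $\frac{2}{3}$ and $\frac{1}{3}$ in that order rather than as literally stated.
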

\begin{proof}
We begin with the first claim. Consider the random variable $T = 1 - C_{i\js}$
where $i$ is sampled from \xl. By Lemma~\ref{lem:scjs}, we have that:
\begin{equation*}
E[T] < 1 - \frac{1 - 6z}{1 + 3z} = \frac{9z}{1 + 3z}.
\end{equation*}
We have that $T \ge \frac{27z}{1 + 3z}$ whenever $C_{i\js} \le 1 -
\frac{27z}{1 + 3z}$, so we can apply Markov's inequality to obtain:
\begin{equation*}
\pr(T \ge \frac{27z}{1 + 3z}) < \frac{\frac{9z}{1 + 3z}}{\frac{27z}{1 + 3z}} =
\frac{1}{3}.
\end{equation*}

We now move on to the second claim. Consider the random variable $T = 1 -
R_{i\js}$ where $i$ is sampled from \xu. By Lemma~\ref{lem:srjp}, we have that:
\begin{equation*}
E[T] < 1 - \frac{1 - 15z}{1 + 3z} = \frac{18z}{1 + 3z}.
\end{equation*}
We have that $T \ge \frac{27z}{1 + 3z}$ whenever $R_{i\js} \le 1 -
\frac{27z}{1 + 3z}$, so we can apply Markov's inequality to obtain:
\begin{equation*}
\pr(T \ge \frac{27z}{1 + 3z}) < \frac{\frac{18z}{1 + 3z}}{\frac{27z}{1 + 3z}} =
\frac{2}{3}.
\end{equation*}
\qed
\end{proof}

Finally, we can formally prove the existence of $b$ and $s$, which completes the
proof of correctness for our algorithm.

\begin{proof}[of Lemma~\ref{lem:bs}]
We begin by proving the first claim. If we sample a row $b$ randomly from $\xu$,
then Lemma~\ref{lem:b-prob} implies that probability that $R_{b\js} \le 1-
\frac{18z}{1 + 3z}$ is strictly less than $0.5$ and that the probability that
$C_{b\jp} \le 1- \frac{18z}{1 + 3z}$ is strictly less than $0.5$. Hence, by the
union bound, the probability that at least one of these events occurs is
strictly less than $1$. So, there is a positive probability that neither of the
events occurs, which implies that there exists at least one row $b$ that
satisfies the desired properties.

The second claim is proved using exactly the same technique, but using the
bounds from Lemma~\ref{lem:s-prob}, again observing that the probability that a
randomly sampled row from $\xl$ satisfies the desired properties with positive
probability. \qed
\end{proof}

This completes the proof of Lemma~\ref{lem:bs}.

\end{document}